\newtheorem{theorem}{Theorem}[section]
\newtheorem{proposition}[theorem]{Proposition}
\newtheorem{lemma}[theorem]{Lemma}
\newtheorem{claim}[theorem]{Claim}
\newenvironment{proof}[1][Proof]{\textbf{#1.} }{\ \rule{0.5em}{0.5em}}
\begin{document}

\title{On bin packing with clustering and bin packing with delays}

\author{Leah
Epstein\thanks{ Department of Mathematics, University of Haifa,
Haifa, Israel. \texttt{lea@math.haifa.ac.il}. Partially supported
by a grant from GIF - the German-Israeli Foundation for Scientific
Research and Development (grant number I-1366-407.6/2016).}}

\date{}

\maketitle

\begin{abstract}
We continue the study of two recently introduced bin packing type
problems, called bin packing with clustering, and online bin
packing with delays. A bin packing input consists of items of sizes not
larger than $1$, and the goal is to partition or pack them into
bins, where the total size of items of every valid bin cannot
exceed $1$.

In bin packing with clustering, items also have colors associated
with them. A globally optimal solution can combine items of
different colors in bins, while a clustered solution can only pack
monochromatic bins. The goal is to compare a globally optimal
solution to an optimal clustered solution, under certain
constraints on the coloring provided with the input.  We show
close bounds on the worst-case ratio between these two costs,
called {\it the price of clustering}, improving and simplifying
previous results. Specifically, we show that the price of
clustering  does not exceed $1.93667$, improving over the previous
upper bound of $1.951$, and that it is at least $1.93558$,
improving over the previous lower bound of $1.93344$.

In online bin packing with delays, items are presented over time.
Items may wait to be packed, and an algorithm can create a new bin
at any time, packing a subset of already existing unpacked items
into it, under the condition that the bin is valid. A created bin
cannot be used again in the future, and all items have to be
packed into bins eventually. The objective is to minimize the
number of used bins plus the sum of waiting costs of all items,
called delays. We build on previous work and modify a simple
phase-based algorithm. We combine the modification with a careful
analysis to improve the previously known competitive ratio from
$3.951$ to below $3.1551$.
\end{abstract}

\section{Introduction}
In bin packing problems, a set of items $I$ is given, where each
item has a rational size in $[0,1]$\footnote{We allow zero sizes
as they are meaningful in the second problem which we study.}. The
goal is to partition these items into subsets called bins, where
the total size for each bin does not exceed $1$. We use the term
load of a bin for the sum of sizes of its items.  The process of
assigning an item to a bin is called packing, and in such a case
we say that the item is packed into the bin.

We study two bin packing problems. The first problem is called
{\it bin packing with clustering}. In this problem, every item has
a second attribute, called a cluster index or a color. A global
solution is one where items are packed without considering their
clusters, i.e., it is a solution of the classic bin packing
problem for this input. A clustered solution is one where every
cluster or color must have its own set of bins, and items of
different clusters cannot be packed into a common bin. To avoid
degenerate cases, an assumption on the input is enforced.
Specifically, it is assumed that every cluster is sufficiently
large, and an optimal solution for each cluster has at least three
bins. The problem was introduced by Azar et al. \cite{AESV}. It
was shown \cite{AESV} that replacing this assumption with the
weaker one where clusters have at least two bins makes the problem
less meaningful. The goal is to compare optimal solutions, that
is, to compare an optimal clustered solution to an optimal global
solution, also called a globally optimal solution. We are
interested in the worst-case ratio over all valid inputs, and this
ratio is called {\it price of clustering}. From an algorithmic
point of view, the goal is to design an approximation algorithm
for which it is not allowed to mix items of different clusters,
while the algorithm still has a good approximation ratio compared
to a globally optimal solution. For applications of this problem
in the field of massive data sets, see \cite{AESV}.

The results of \cite{AESV} show that the price of clustering
(under the assumption above) is strictly below $2$, and more
specifically, it is at most $1.951$. The methods used to prove
this are based on an auxiliary graph comparing the two different
optimal solutions, and a linear program capturing the properties
of worst-case inputs. A computer assisted proof was used to find
an upper bound on the price of clustering. A lower bound of
$1.93344$ was provided as well in the same work. This problem is
closely related to {\it batched bin packing}
\cite{GJYbatch,Do15,BBDGT,Epstein16}. This is a semi-online
problem where items are presented in a number of batches, where
every batch is to be packed before the next batch is presented.
There are two variants, depending on whether bins opened for
earlier batches can be used for the current batch. The variant
where every batch has its own bins, and the packing is compared to
an optimal (offline) one where items of different batches can
still be combined into bins together is closely related to our
work. It is mentioned in \cite{AESV} that if every cluster is
arbitrarily large such that its optimal cost grows to infinity,
then the price of clustering decreases to approximately $1.691$
(we discuss this value
\cite{BC81,LeeLee85,GW93,EL06a,Woegin93,Epstein16} in the body of
the paper in a different context). In fact, this result regarding
the price of clustering with very large clusters follows directly
from an  earlier result for batched bin packing \cite{Epstein16}.

The second problem is {\it bin packing with delays}. In this
online problem, items are presented over time to be packed into
bins. An algorithm can decide to create a bin at any time by
selecting a subset of already existing unpacked items. The
selected subset should have total size at most $1$, and once its
bin is created, it cannot be used again for future items.
Additionally, every item $i$ has a positive monotonically
non-decreasing delay function $d_i$, and letting $t_i \geq 0$ be
the elapsed time from the arrival date of $i$ until it is packed,
the delay cost (or delay) of $i$ is $d_i(t_i)$. The objective is
to minimize the number of bins plus the total delay cost of all
input items, and the goal is to minimize this objective. For
example, if every item is assigned to a bin right when it arrives,
the delays are the smallest possible, but the number of bins may
be very large. On the other hand, if the algorithm waits until
many items arrive and it can pack them offline, the delay costs
may be very large. The problem is analyzed via the competitive
ratio, which is the worst-case ratio between the cost of an online
algorithm and an optimal offline solution (which still deals with
the input as a sequence arriving over time, but it knows the
entire sequence). Competitive algorithms should find a trade-off
between waiting for additional items to arrive and the resulting
delay costs of already existing items, and one expects to see
algorithms designed based on ski-rental type methods
\cite{Karp92,KMRS88,karlin2003dynamic}. Such methods involve
waiting until a certain cost is incurred before performing an
action that stops the accumulation of that cost. Obviously,
additional problem-specific methods are required in the design of
algorithms for problems with delay costs.

Various online combinatorial optimization problems with delays
were studied recently
\cite{emek2016online,azar2017online,bienkowski2017match},
continuing earlier studies of ski-rental type problems. Moreover,
a completely different model of bin packing with delays was
studied as well \cite{ahlroth2013online}. Offline and online bin
packing are often studied with respect to asymptotic measures
\cite{FvL,KK82,BBDEL_ESA18,BBDEL_newlb}, while here we study them
via absolute measures, as in previous work on the specific
problems we study, where the absolute measure is more appropriate
(see \cite{SL94,ftsoda,DS12,DS14} for studies of bin packing with
respect to absolute measures). The two problems studied here may
seem unrelated; one is an offline problem and the other one is a
completely different online problem. The flavor of the first
problem is not algorithmic, and the algorithmic contribution is
used in the analysis. The second problem is an online problem
where items arrive over time, and even if one designs an offline
algorithm for it, still the time axis has a major role. Since the
two problems were introduced and studied in the same work
\cite{AESV} where properties of the first one were used in the
analysis of the second one, we study them together as well. Note
that we also use properties of offline bin packing for the
analysis of the online problem, as we will pack subsets of items
at the same time, into one bin or several bins. Bin packing with
delays is a special case of the TCP acknowledgement problem
\cite{dooly2001line,karlin2003dynamic}. In this problem requests
arrive over time, and should be acknowledged at times selected by
the algorithm, where at every such time, all pending requests can
be acknowledged. The objective is the number of acknowledgement
events plus the total waiting time of all requests. Instances of
this problem are instances of bin packing with delays with zero
size items and delay costs based on the identity function (there
is also work on more general delay functions, see for example
\cite{AB05}). Using the lower bound of $2$ on the competitive
ratio of any algorithm for TCP acknowledgement, a lower bound of
$2$ is known also for the competitive ratio of any algorithm for
bin packing with delays.

In this work, we improve the bounds on the price of clustering,
and show close bounds of  $1.93667$ and $1.93558$. The upper bound
is shown via weighting functions, while the lower bound uses a
careful refinement of the previous lower bound approach, where not
only clusters with items of sizes close to $\frac 12$ are defined
with respect to the worst-case structure but also more complicated
clusters are built. We also show how the previous upper bound
result can be obtained using a simple analytical proof, and we
briefly discuss other versions (with larger clusters). We also
generalize the previous algorithm for bin packing with delays such
that its parameter can be arbitrary. Here, we apply a simple
weight based analysis to obtain a better upper bound of $3.1551$,
while the previous bound was $3.951$ \cite{AESV}. Our algorithm
does not require computation of optimal solutions, and whenever it
packs a subset of items, this is done using a greedy algorithm,
and therefore it runs in polynomial time if the delay function can
be computed easily.

\section{Price of clustering}
In this section we study the price of clustering. Note that we
consider the case where optimal costs for clusters are at least
$3$. Considering a parameter $k \geq 1$, such that the optimal
cost for every cluster is at least $k$, the cases $k=1,2$ were
fully analyzed and declared as uninteresting \cite{AESV}. For
$k=1$, the price of clustering is unbounded, as an input of very
small items may be partitioned into clusters containing single
items. For $k=2$, the price of clustering is $2$, since every
cluster may have one item slightly larger than $\frac 12$ and one
item slightly smaller than $\frac 12$, where these items cannot be
packed into one bin, while in a globally optimal solution they can
be packed in the suitable pairs. On the other hand, every bin is
full by more than half on average.  We study the most general case
where the bound on the price of clustering is strictly below $2$.
For a different parameter $k \geq 4$ one can use similar proofs to
find close bounds (the tight bounds are expected to tend to
approximately $1.691$ as $k$ grows). The lower bounds will have a
similar structure while the upper bounds will require some
modifications of the weight functions.

\subsection{A lower bound}
Our new lower bound has some similarity to the one of \cite{AESV}.
The idea was that there can be clusters with two items of sizes
just above $\frac 12$ and one item of size just below $\frac 12$,
such that no two items can be combined in one bin. In our
construction, there will also be clusters with five items of sizes
approximately $\frac 13$, such that no three items can be packed
into a bin, so at most two of them have sizes of $\frac 13$ or
less. Similarly, there will be clusters with $11$ items of sizes
approximately $\frac 16$, where no six items can be packed into a
bin. It is possible to continue and have some clusters with $13$
items of sizes approximately $\frac 17$ and clusters with $83$
items of sizes approximately $\frac 1{42}$ and so forth, but this
will increase only the sixth digit after the decimal point. As
already now the items sizes have to be defined carefully, and
calculations need to be done precisely to ensure the costs of
clusters, we do not give the details of such a construction. The
current construction can be also continued with additional very
small items, but that would also not increase the value of the
lower bound significantly.

Let $M>2$ be a large integer. Let $N>10$ be an integer parameter
divisible by $5000!\cdot 9^M$,  We construct an input where there
is a globally optimal solution with $N$ bins. The input consists
of the following items. Let $\mu>0$ be a very small value.

\begin{itemize}
\item For $1 \leq i \leq \frac N2$, a positive type $(2,i)$ item
has size $\frac 12+i\cdot \mu$.

There is one such item for
every $i=1,2,\ldots,\frac N2$.

\item For $1 \leq i \leq \frac N2$, a negative type $(2,i)$ item
has size $\frac 12-i\cdot \mu$.

There are one such item for
every $i=1,2,\ldots,\frac N2$.

\item There are $\frac N2$ type $2$ items, each of size $\frac
12+\frac N2 \cdot \mu$.

\item For $1 \leq i \leq M$, a positive type $(3,i)$ item has size
$\frac 13+3^{N+3i}\cdot \mu$. The number of positive type $(3,i)$
items is $\frac{2N}{15} \cdot (\frac 59)^{M-i}$.

\item For $1 \leq i \leq M$, a negative type $(3,i)$ item has size
$\frac 13-3^{N+3i-1}\cdot \mu$.  The number of negative type
$(3,i)$ items is $\frac{4N}{45} \cdot (\frac 59)^{M-i}$.

\item For $2 \leq i \leq M$, a positive type $(6,i)$ item has size
$\frac 16+3^{N+3i-1}\cdot \mu-N\cdot \mu$. The number of positive
type $(6,i)$ items is: \\ $\frac{4N}{45} \cdot (\frac
59)^{M-i}=\frac{4N}{25} \cdot (\frac 59)^{M-(i-1)}$.

\item For  $1 \leq i \leq M-1$, a negative type $(6,i)$ item has
size $\frac 16-3^{N+3i}\cdot \mu-N\cdot \mu$. The number of
negative type $(6,i)$ items is $\frac{2N}{15} \cdot (\frac
59)^{M-i}$.

\item A type $7$ item has size $\frac 17+\mu$. The number of such
items is $\frac{2N}{15}$.

\item A type $43$ item has size $\frac 1{43}+\mu$. The number of
such items is $\frac{2N}{15}$.

\item A type $1807$ item has size $\frac 1{1807}+\mu$. The number
of such items is $\frac{2N}{15}$.

\end{itemize}

It is obvious that there is no global solution whose cost is below
$N$. A globally optimal solution is defined as follows. For
$i=1,2,\ldots,\frac N2$, there is a bin with one positive type
$(2,i)$ item and one negative type $(2,i)$ item, where the total
size for such a pair of items is $1$.

Every bin out of the remaining $\frac N2$ bins has a type $2$
item, so the remaining space of such a bin is $\frac 12-\frac
N2\cdot \mu$, and this is where all other items are packed.
Specifically, every positive or negative type $(3,i)$ is packed
into a different such bin. The number of such items is
$\frac{10N}{45} \cdot \sum_{i=1}^M \left(\frac 59\right)^{M-i}  $,
where
$$\sum_{i=1}^M \left(\frac 59\right)^{M-i}=\sum_{i=0}^{M-1} \left(\frac
59\right)^{i}<\sum_{i=0}^{\infty} \left(\frac 59\right)^{i}=\frac
94 \ .
$$ Thus the number of these items is below $\frac{10N}{45}\cdot
\frac 94=\frac N2$.

The $\frac {2N}{15}$ bins with the largest positive type $(3,i)$
items, that is, each of the bins with a positive type $(3,M)$
item, contains also one item of each type out of $7$, $43$, and
$1807$. The total size for such a bin is $$\frac 12+\frac N2 \cdot
\mu + \frac 13+3^{N+3M}\cdot \mu +\frac 17+\frac 1{43}+\frac
1{1807}+3\mu<1 \ , $$ for a sufficiently small value of $\mu$.

For $i<M$, every positive type $(3,i)$ items is packed with a
negative type $(6,i)$ item. For $i>1$, every negative type $(3,i)$
items is packed with a positive type $(6,i)$ item. The total size
of items of every such bin is exactly  $1-\frac N2\cdot \mu$. As
for negative type $(3,1)$ items, they are not combined with
additional items and the loads of their bins are approximately
$\frac 56$. Thus, all items are packed into $N$ bins as claimed.

Next, we split items into clusters, and we find the optimal cost
for every cluster (in particular we will see that it is at least
$3$ as it is required for a valid input). We will calculate the
total number of bins for the optimal clustered solution. In this
input, every cluster will have items of similar sizes.

\medskip

\noindent  {\bf 1.}  Type $1807$ items are split into subsets of
$3613$ items each. Since a bin can contain at most $1806$ such
items while $\mu$ is sufficiently small such that $1806$ items can
be packed into a bin, an optimal solution has three bins. Thus, as
there are $\frac{2N/15}{3613}$ clusters, the contribution to the
cost is $\frac{2N}{18065}$.

\smallskip

\noindent {\bf 2.} The calculation for type $43$ items is similar
to the previous one. Here a cluster will have $85$ items, there
are $\frac{2N/15}{85}$ clusters, the contribution to the cost is
$\frac{2N}{425}$.

\smallskip

\noindent {\bf 3.} The calculation for type $7$ items is similar
to the last two calculations. Here a cluster will have $13$ items,
there are $\frac{2N/15}{13}$ clusters, the contribution to the
cost is $\frac{2N}{65}$.

\smallskip

\noindent {\bf 4.} There are $\frac N2-1$ clusters, each
containing one type $2$ item, and for some value of $i$ ($1 \leq i
\leq \frac N2-1$), a positive type $(2,i+1)$ item and a negative
type $(2,i)$ item. As no two items of one cluster have total size
of $1$ or less, the optimal cost for each cluster is $3$. The
contribution to the cost is therefore $3(\frac N2-1)$. The
remaining three items, a type $2$ item, a positive type $(2,1)$
item, and a negative type $(2,\frac N2)$ item are added to one of
the clusters, which does not decrease its optimal cost.
\smallskip

\noindent {\bf 5.} For every $1 \leq i \leq M$, there is a cluster
consisting of five items as follows: three positive type $(3,i)$
items and two negative type $(3,i)$ items. The number of clusters
for a fixed value of $i$ is $\frac{2N}{45} \cdot \left(\frac
59\right)^{M-i}$. Since $$\frac 13+3^{N+3i}\cdot \mu+2\left(\frac
13-3^{N+3i-1}\cdot \mu\right)=1+3^{N+3i-1}\cdot \mu>1 \ , $$ no
three items fit into one bin, and an optimal solution for every
cluster uses three bins. The contribution to the cost is
$$3 \cdot \frac{2N}{45} \cdot \sum_{i=1}^M\left(\frac
59\right)^{M-i}=\frac{2N}{15}\cdot \sum_{i=0}^{M-1}\left(\frac
59\right)^{i}=\frac{2N}{15}\cdot\frac{1-\left(\frac
59\right)^M}{4/9}=0.3N\left(1-(\frac 59)^M\right) \ . $$
\smallskip

\noindent
{\bf 6.}  For every $1 \leq i \leq M-1$, there is a cluster consisting
of eleven items as follows: six positive type $(6,i+1)$ items and
five negative type $(6,i)$ items. The number of clusters for a
fixed value of $i$ is $\frac{2N}{75} \cdot (\frac 59)^{M-i}$.

Since $$\frac 16+3^{N+3i+2}\cdot \mu-N\cdot \mu+5(\frac
16-3^{N+3i}\cdot \mu-N\cdot \mu)=1+4 \cdot 3^{N+3i}\cdot
\mu-6N\mu>1 \,
$$ no six items fit into one bin, and an optimal solution for
every cluster uses three bins. The contribution to the cost is
$$3 \cdot \frac{2N}{75} \cdot \sum_{i=1}^{M-1}(\frac 59)^{M-i}= \frac{2N}{25} \cdot \frac 59 \cdot \sum_{i=0}^{M-2}(\frac 59)^{i}=\frac{2N}{45}\cdot \sum_{i=0}^{M-2}(\frac 59)^{i}=\frac{2N}{45}\cdot\frac{1-(\frac 59)^{M-1}}{4/9}
=0.1N(1-(\frac 59)^{M-1}) \ . $$
\smallskip

\noindent
All items were assigned to suitable clusters, and the total cost of the clustered optimal solution is at
least
$$N\cdot\left(\frac{2}{18065}+\frac 2{425}+\frac{2}{65}+\frac{3}{2}-3/N+0.3\cdot(1-(5/9)^M)+0.1(1-(5/9)^{M-1}\right)  \ . $$

Letting $N$ and $M$ grow without bound, the ratio between the
costs of the two optimal solutions is approximately
1.9355858244424.

\begin{theorem}
The price of clustering is at least $1.93558$.
\end{theorem}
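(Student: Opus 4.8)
The construction preceding the statement already carries out almost all of the work, so the plan is to assemble it into four verifications and then pass to the limit.

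\emph{Step 1: the globally optimal cost equals $N$.} I would first note that the only items of size larger than $\tfrac12$ are the $\tfrac N2$ positive type $(2,i)$ items and the $\tfrac N2$ type $2$ items, so no bin holds two of them and every global solution uses at least $N$ bins. For the matching upper bound I would exhibit the explicit packing described above: for $i=1,\dots,\tfrac N2$ pair the positive and negative type $(2,i)$ items (total size exactly $1$); let each of the remaining $\tfrac N2$ bins carry a type $2$ item together with either a type $(3,i)$ item paired with a matching type $(6,i)$ item, or (in the $\tfrac{2N}{15}$ bins holding a positive type $(3,M)$ item) one item each of types $7$, $43$, $1807$. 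The checks are that the numbers of type $(3,i)$ and type $(6,i)$ items exactly fit the $\tfrac N2$ available bins --- using $\sum_{i=1}^M (5/9)^{M-i}<9/4$ --- and that every bin load is at most $1$ for all sufficiently small $\mu>0$, which is where the perturbations $3^{N+3i}\mu$ with the $-N\mu$ corrections are calibrated.

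\emph{Step 2: the clustering is valid and its cost is large.} I would check that each described cluster has optimal cost at least $3$: for the clusters of types $1807$, $43$, $7$ a bin holds at most $1806$, $42$, $6$ such items while the clusters have $3613$, $85$, $13$ of them; for the type-$\tfrac12$ clusters no two items of a cluster fit together; and for the five-item type-$3$ and eleven-item type-$6$ clusters the displayed inequalities show that three (resp.\ six) items of a cluster already exceed total size $1$. Hence the input is valid, the three leftover items can be absorbed into an existing cluster without changing its cost, and summing the contributions of items 1--6 gives
$$\mathrm{OPT}_{\mathrm{clustered}}\ \ge\ N\!\left(\tfrac{2}{18065}+\tfrac{2}{425}+\tfrac{2}{65}+\tfrac32-\tfrac3N+0.3\bigl(1-(5/9)^M\bigr)+0.1\bigl(1-(5/9)^{M-1}\bigr)\right).$$

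\emph{Step 3: pass to the limit.} Dividing by $\mathrm{OPT}_{\mathrm{global}}=N$ and letting $N\to\infty$ (which makes the divisibility requirement $5000!\cdot 9^M\mid N$ satisfiable and removes the $\tfrac3N$ term) and then $M\to\infty$, the ratio tends to $\tfrac{2}{18065}+\tfrac{2}{425}+\tfrac{2}{65}+\tfrac32+0.4\approx 1.9355858$, which exceeds $1.93558$. Since the price of clustering is the supremum over valid inputs of $\mathrm{OPT}_{\mathrm{clustered}}/\mathrm{OPT}_{\mathrm{global}}$, taking $N$ and $M$ large enough produces a valid input witnessing a ratio above $1.93558$, which proves the theorem. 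The one genuinely delicate point --- and the step I would devote the most care to --- is the joint calibration of the sizes: a single $\mu>0$ must make every ``fits in a bin'' inequality of Step 1 hold while simultaneously making every ``does not fit'' inequality of Step 2 hold, which is exactly why the perturbations are taken as distinct powers $3^{N+3i}\mu$ with $\pm N\mu$ corrections; moreover all item multiplicities --- which involve $(5/9)^{M-i}$ and the denominators $13$, $85$, $3613$ --- must be integers, the role of taking $N$ divisible by $5000!\cdot 9^M$.
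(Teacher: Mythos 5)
Your proposal is correct and follows essentially the same route as the paper: the paper's proof of this theorem \emph{is} the construction preceding the statement, and your three steps (exactly $N$ bins globally because there are exactly $N$ items of size above $\tfrac12$ plus the explicit packing; validity of each cluster with optimal cost $3$ via the displayed ``does not fit'' inequalities; summing the six contributions and letting $N,M\to\infty$) reproduce its verification, including the delicate points about calibrating a single $\mu$ and the divisibility of $N$. Nothing further is needed.
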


\subsection{Upper bounds for $\boldsymbol{k=3}$}

We will start the analysis of upper bounds with a simple analysis
of the price of clustering, yielding the bound of \cite{AESV} in a
simple way (in fact, since we use an analytic proof, we show a
value of $1.95$ rather than $1.951$). Unlike the previous proof,
we do not use auxiliary graphs or computer assisted analysis. Our
improved result of $\frac{349}{180}\approx 1.38889$ will be based
on an extension of the approach of the simple bound.

The analysis yielding the bound $1.95$ resembles the one of
Simchi-Levi for First-Fit Decreasing (FFD) \cite{SL94}. In this
algorithm, items are sorted by non-increasing size and First-Fit
(FF) is applied to this list. FF is a greedy algorithm that packed
every item into the bin of smallest index where it can be packed,
given the previously packed items, which are not smaller in the
case of FFD.

For a fixed input, let $\ell$ be the number of clusters. Let
$OPT_i$ be the number of bins in an optimal solution for the $i$th
cluster, whose input is $I_i$. We let $I$ be the set of items
$I=\bigcup_{1 \leq i \leq \ell} I_i$, where $n=|I|$. Let $OPT$ be
a globally optimal solution for $I$, as well as its cost. Let
$A_i$ be the number of bins in the output of FFD for cluster $i$.

We will use weights for the analysis. Weights allow us to bind two
solutions and compare them, using the property that the total
weight of all input items can be defined consistently.

We start with defining a simple weight function. Let
$w(x):[0,1)\rightarrow (0,1.95]$ as follows.
\begin{equation*}
w(x)=1.8 \cdot x+\begin{cases} 0.15 { \ \ \ \mbox {for} \ \ \ } x>\frac 12, \\
0 { \ \ \  \ \ \ \ \mbox {for} \ \ \ } 0 \leq x \leq \frac 12.
\end{cases}
\end{equation*}

Let $W=\sum_{j \in I}w(s_j)$, where $s_j>0$ is the size of item
$j$. Let $W_i=\sum_{j \in I_i} w(s_j)$, where $W=\sum_{i=1}^{\ell}
W_i$. Similarly, $S=\sum_{j=1}^n s_j$, and $S_i=\sum_{t \in I_i}
s_t$. An item of size strictly above $0.5$ is called large. The
next claim provides an upper bound on the total weight, based on
the value $OPT$.

\begin{claim} We have $W \leq 1.95 \cdot OPT$.\end{claim}
\begin{proof} Consider a bin $\cal{B}$ of OPT. The total size of items
is at most $1$, and there is at most one large item, which gives
at most $1.8+0.15$, since the total weight is at most  \\
$(\sum\limits_{j \in \cal{B}} 1.8\cdot s_j)+0.15
=(1.8\sum\limits_{j \in \cal{B}} s_j)+0.15 \leq 1.95 $.\end{proof}

\smallskip

The next claim holds by definition, and by the assumption on
clusters.
\begin{claim} $\sum_{i=1}^{\ell} OPT_i \leq \sum_{i=1}^{\ell} A_i$ and
$3 \leq OPT_i \leq A_i$.
\end{claim}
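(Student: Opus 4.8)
The plan is to verify the two displayed assertions directly from the setup established just before the claim, namely that the input was split into $\ell$ clusters with input sets $I_1,\dots,I_\ell$, that $OPT_i$ denotes the optimal (offline) number of bins for cluster $i$, that $A_i$ denotes the number of bins produced by First-Fit Decreasing (FFD) on $I_i$, and that the validity assumption on the coloring guarantees $OPT_i \ge 3$ for every $i$. The chain $3 \le OPT_i \le A_i$ then splits into two independent inequalities per cluster, and summing the second one over $i$ yields $\sum_{i=1}^\ell OPT_i \le \sum_{i=1}^\ell A_i$.

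First I would dispatch $3 \le OPT_i$: this is precisely the standing assumption on valid inputs recalled at the start of Section~2 (every cluster is large enough that an optimal solution for it uses at least three bins), so nothing needs to be proved here beyond citing that hypothesis. Next I would argue $OPT_i \le A_i$: FFD, applied to the item set $I_i$, produces a feasible packing of $I_i$ into $A_i$ bins (every bin it opens has load at most $1$, since FF only places an item where it fits), so $A_i$ is the cost of \emph{some} valid solution for cluster $i$; since $OPT_i$ is the cost of a \emph{minimum}-cost valid solution for the same item set, optimality gives $OPT_i \le A_i$. Finally, adding the inequality $OPT_i \le A_i$ over all $i=1,\dots,\ell$ gives $\sum_{i=1}^\ell OPT_i \le \sum_{i=1}^\ell A_i$, and recording $3 \le OPT_i$ alongside it completes the statement.

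There is essentially no obstacle here: the claim is, as the text says, true "by definition, and by the assumption on clusters," so the only care needed is to state clearly which two facts are being invoked — feasibility/optimality for $OPT_i \le A_i$, and the $k=3$ validity hypothesis for $3 \le OPT_i$ — and to note that FFD indeed outputs a valid packing so that $A_i$ is an admissible upper bound for $OPT_i$. The summation step is immediate monotonicity of finite sums.
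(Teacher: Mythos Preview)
Your proposal is correct and matches the paper's approach exactly: the paper does not give a formal proof of this claim but simply notes (as you quote) that it ``holds by definition, and by the assumption on clusters,'' and your write-up just spells out those two invocations --- the validity hypothesis for $3 \le OPT_i$ and the fact that FFD produces a feasible packing for $OPT_i \le A_i$ --- plus the trivial summation.
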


Consider the output bins of FFD for some input. Recall that
indexes of bins are given according to the order in which FFD
opens (first uses) them. Let all bins for an output of FFD be
called {\it inner} except for the last bin. When we say that a bin
is earlier than another bin, we mean that it has a smaller index,
and a later bin has a larger index.

\begin{claim}\label{was3} The set of bins of FFD with large items is a prefix
of its bins. Every pair of bins of FFD has total size above $1$.
If the first item of some bin has size above $\frac 1{s}$ for an
integer $s \geq 2$, then every earlier bin without any item of
size above $\frac{1}{s-1}$ has $s-1$ items of sizes in $(\frac
1s,\frac{1}{s-1}]$ (and possibly other items packed later).
\end{claim}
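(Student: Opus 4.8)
The plan is to prove the three assertions in turn, using only standard properties of First-Fit run on a non-increasingly sorted list. \emph{For the prefix claim}, note that the large items form a prefix of the sorted list, and while FFD processes them every bin opened so far contains exactly one item, which is large, because two large items overflow a bin and no item of size at most $\frac12$ has appeared yet; hence each large item opens a fresh bin. So if there are $L$ large items they occupy bins $1,\dots,L$, one apiece, and every bin opened afterwards receives only items of size at most $\frac12$; thus the bins carrying a large item are exactly $1,\dots,L$. \emph{For the pairwise-size claim}, take bins $B_i,B_j$ with $i<j$ and let $x$ be the first item of $B_j$; when $x$ was packed, $B_i$ already existed and rejected $x$, so the load of $B_i$ at that moment plus the size of $x$ exceeded $1$, and since loads never decrease and $B_j$ retains $x$, the final loads of $B_i$ and $B_j$ sum to more than $1$.

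For the third claim, let $B$ be the bin in question, $a$ its first item, $\sigma>\frac1s$ the size of $a$, and let $B'$ be an earlier bin no item of which exceeds $\frac1{s-1}$. Since $B'$ was opened before $a$ was packed, its first item was processed before $a$ and so has size $\ge\sigma$; being $\le\frac1{s-1}$, it forces $\sigma\le\frac1{s-1}$. More generally, every item lying in $B'$ at the instant $a$ is processed was processed before $a$, so has size in $[\sigma,\frac1{s-1}]\subseteq(\frac1s,\frac1{s-1}]$. Since $a$ opened a new bin, it did not fit into $B'$, so the load of $B'$ at that instant exceeded $1-\sigma\ge 1-\frac1{s-1}=\frac{s-2}{s-1}$. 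If $B'$ then held at most $s-2$ items, its load would be at most $\frac{s-2}{s-1}$, a contradiction; hence $B'$ holds at least $s-1$ items of sizes in $(\frac1s,\frac1{s-1}]$, all packed before $a$, and any further items are packed later.

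The arithmetic throughout is routine. I expect the main (and only modest) obstacle to be the bookkeeping in the third part: correctly identifying which items already sit in $B'$ at the moment $a$ is processed, and then using the non-increasing processing order to confine all their sizes to the narrow interval $(\frac1s,\frac1{s-1}]$ so that the counting argument goes through.
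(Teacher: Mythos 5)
Your proof is correct and is simply a fully spelled-out version of the paper's one-line justification (large items are processed first and cannot share a bin; the other two parts follow from the First-Fit rule for opening a new bin). All three arguments — the prefix structure, the pairwise load bound via the first item of the later bin, and the counting argument showing an earlier bin must already hold $s-1$ items each of size in $(\frac1s,\frac1{s-1}]$ — match the intended reasoning.
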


The first part holds because the large items are packed first (and
a pair of such items cannot share a bin). The second part and
third part hold due to the rule of opening a new bin.

\begin{claim}\label{con1} Assume that all inner bins of FFD except for
possibly one bin (called bad) have loads of at least $\frac 23$
for some cluster $i$. Then, the total weight is at least $A_i$.
\end{claim}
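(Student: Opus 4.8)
The plan is to prove the (slightly stronger) bound $W_i > A_i$ using only the trivial inequality $w(x)\ge 1.8x$ — so the $0.15$ bonus term plays no role here — together with the pairing property from Claim~\ref{was3} that any two bins in the FFD packing of a cluster have combined load strictly above $1$. Writing $S_i=\sum_{t\in I_i}s_t$, we have $W_i=\sum_{j\in I_i}w(s_j)\ge 1.8\,S_i$, so it is enough to show
$$S_i > \tfrac23\,(A_i-2)+1,$$
because then $W_i>1.8\bigl(\tfrac23(A_i-2)+1\bigr)=1.2\,A_i-0.6\ge A_i$, where the last inequality uses $A_i\ge OPT_i\ge 3$; this is the only place the ``optimal cost at least $3$'' assumption on clusters enters the claim.

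To bound $S_i$ I would separate the last bin, the ``bad'' inner bin, and the good inner bins (those of load $\ge\tfrac23$), of which there are at least $A_i-2\ge 1$ by hypothesis, and then split into two cases. If there is no bad bin, all $A_i-1$ inner bins are good; pick any inner bin $B$ and write $S_i=L_B+\sum_{j\ \mathrm{inner},\,j\neq B}L_j+L_{A_i}\ge L_B+\tfrac23(A_i-2)+L_{A_i}$, and since $B$ and the last bin $B_{A_i}$ form a pair, $L_B+L_{A_i}>1$, which yields the required estimate. If there is a bad bin $B_{\mathrm{bad}}$, then the remaining $A_i-2$ inner bins are all good, so $S_i\ge \tfrac23(A_i-2)+L_{\mathrm{bad}}+L_{A_i}$, and now $B_{\mathrm{bad}}$ and $B_{A_i}$ form a pair, so again $L_{\mathrm{bad}}+L_{A_i}>1$ and the bound follows. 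In both cases the same inequality $S_i>\tfrac23(A_i-2)+1$ holds, which finishes the proof.

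The only real subtlety — and the step I would expect to need care — is that the loads of the last bin and of the bad bin cannot be bounded below on their own, since either could be close to $0$; the trick is to ``charge'' each of them through the pairing property of Claim~\ref{was3} (to a good inner bin, or to each other) rather than to account for its own load. Everything else is bookkeeping, in particular verifying that a good inner bin to pair against always exists, which is precisely why the hypothesis $A_i\ge 3$ is needed.
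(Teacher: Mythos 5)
Your proof is correct and follows essentially the same route as the paper: pair the last bin with the bad inner bin (or with an arbitrary inner bin if none is bad) to get total size above $1+\frac23(A_i-2)$, then use only the linear part $w(x)\ge 1.8x$ and $A_i\ge 3$ to conclude $W_i\ge 1.2A_i-0.6\ge A_i$. The paper's version is just a one-line compression of your case analysis.
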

\begin{proof} The total size is above $$1+(A_i-2)\cdot \frac
23=2A_i/3-1/3$$ (by considering together the last bin, and the bad
inner bin if it exists or the first inner bin otherwise). Thus,
$W_i \geq 1.8(2A_i/3-1/3)=1.2\cdot A_i-0.6 = A_i+0.2(A_i-3) \geq
A_i$ as $A_i \geq 3$.\end{proof}

Let $\tau=\tau_i$ be the number of bins in the prefix for inner
bins of FFD for cluster $i$ with large items (where $\tau_i=0$ if
this prefix is empty). We have $\tau_i \leq A_i-1$ (because we
only consider inner bins).

\begin{claim}\label{con2}Assume that $\tau_i \geq 2$, and every inner bin whose
index is above $\tau_i$ has load of at least $\frac 23$. Then, the
total weight is at least $A_i$.\end{claim}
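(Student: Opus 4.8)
The plan is to bound the cluster weight directly as $W_i = 1.8\,S_i + 0.15\,L_i$, where $L_i$ is the number of large items in cluster $i$, and to lower-bound $S_i$ and $L_i$ separately in terms of $A_i$ and $\tau_i$. By the definition of $\tau_i$ together with Claim~\ref{was3}, the inner bins $1,\dots,\tau_i$ are exactly the large-item bins among inner bins, and each holds one large item (two cannot share a bin), so $L_i \ge \tau_i$. The four sets $\{1\}$, $\{2,\dots,\tau_i\}$, $\{\tau_i+1,\dots,A_i-1\}$, $\{A_i\}$ partition all $A_i$ bins, so I would estimate $S_i$ by summing loads over these groups.

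For the loads: I pair the first bin with the last bin $A_i$ (distinct since $A_i \ge 3$), so by Claim~\ref{was3} their combined load exceeds $1$; each remaining large-item inner bin $2,\dots,\tau_i$ has load above $\tfrac12$ since it contains a large item; and each non-large inner bin $\tau_i+1,\dots,A_i-1$ has load at least $\tfrac23$ by hypothesis. This gives $S_i > 1 + \tfrac12(\tau_i-1) + \tfrac23(A_i-1-\tau_i) = \tfrac16(4A_i-1-\tau_i)$, whence
$$W_i > 1.8\cdot\tfrac16(4A_i-1-\tau_i) + 0.15\,\tau_i = 1.2\,A_i - 0.3 - 0.15\,\tau_i .$$
Substituting $\tau_i \le A_i-1$ yields $W_i > 1.05\,A_i - 0.15 \ge A_i$, the last step using $A_i \ge 3$.

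The one point requiring care is the choice of grouping rather than the arithmetic: the last bin can be almost empty, so bounding its load by $0$ and treating the other bins individually would leave the coefficient of $A_i$ below $1$ for small $A_i$; pairing the last bin with a bin that is guaranteed to carry a large item is exactly what restores enough total size. The degenerate case $\tau_i = A_i-1$ (no non-large inner bins) needs no separate treatment, since the corresponding sum is simply empty and the same computation goes through.
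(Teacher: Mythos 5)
Your proof is correct and follows essentially the same route as the paper: pair the first and last bins to get total size above $1$, count load above $\frac12$ for the remaining large-item inner bins and at least $\frac23$ for the rest, and add a bonus of $0.15$ per large item, arriving at the same bound $W_i \geq 1.2A_i - 0.15\tau_i - 0.3$. The only (immaterial) difference is the final algebraic step, where you substitute $\tau_i \leq A_i-1$ and invoke $A_i \geq 3$, while the paper uses $\tau_i \geq 2$ together with $A_i \geq \tau_i+1$.
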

\begin{proof} The total size of items is at least $$(\tau-1)\cdot \frac 12
+1 + \frac
23(A_i-1-\tau)=(\tau+1)/2+2(A_i-1-\tau)/3=2A_i/3-\tau/6-1/6$$ (by
considering the first and last bin together, and since $\tau \leq
A_i-1$) the number of inner bins with loads at least $\frac 23$ is
non-negative).

Thus $S_i \geq 2A_i/3-\tau/6-1/6 $, and we have $$W_i \geq
1.8(2A_i/3-\tau/6-1/6)+0.15\tau=1.2 A_i -0.15\tau-0.3\ . $$ As
$\tau \geq 2$ and $A_i \geq \tau+1$ we get $$W_i \geq A_i +0.2
A_i-0.15 \tau -0.3 \geq A_i +
0.2(\tau+1)-0.15\tau-0.3=A_i+0.05\tau-0.1 \geq A_i \ .
$$\end{proof}

Let $\theta$ be the the first item of the last bin of FFD
and its size.
\begin{claim} At least one of the cases of Claims \ref{con1},\ref{con2} holds for every
cluster.\end{claim}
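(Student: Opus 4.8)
The plan is to run a case analysis on $\theta=\theta_i$, the size of the first item of the last bin of FFD for cluster $i$. The unifying observation is that, since the last bin was opened for $\theta$, this item did not fit into any inner bin at the moment it was processed, so every inner bin has final load strictly above $1-\theta$; moreover, since FFD processes items in non-increasing size order, at the moment $\theta$ was processed every already-packed item has size at least $\theta$.

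First, if $\theta>\frac 12$, then the last bin contains a large item, so by Claim \ref{was3} all $A_i$ bins contain a large item, and hence $\tau_i=A_i-1\ge 2$ (using $A_i\ge OPT_i\ge 3$). In particular there is no inner bin of index above $\tau_i$, so the load hypothesis of Claim \ref{con2} holds vacuously and Claim \ref{con2} applies. Second, if $\theta\le\frac 13$, then every inner bin has load above $1-\theta\ge\frac 23$, so Claim \ref{con1} applies with no bad bin.

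The remaining case $\frac 13<\theta\le\frac 12$ is the crux, and here I would first show that every inner bin not containing a large item has load above $\frac 23$. Let $B$ be such a bin and consider the moment $\theta$ was processed: $B$ was already open (it is inner), its load was above $1-\theta$, and every item in it had size at least $\theta>\frac 13$. Therefore $B$ held fewer than three items at that moment (three items of size $>\frac 13$ exceed $1$), it did not hold exactly one item (that item would have size above $1-\theta\ge\frac 12$, i.e.\ be large), and it held positive load, so it held exactly two items, each of size at least $\theta$; hence its final load is at least $2\theta>\frac 23$. Since the bins with large items form the prefix $1,\dots,\tau_i$ of the inner bins (Claim \ref{was3}, noting the last bin has no large item in this case), every inner bin of index above $\tau_i$ has load above $\frac 23$. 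If $\tau_i\ge 2$ this is precisely the hypothesis of Claim \ref{con2}; if $\tau_i\le 1$, then all inner bins except possibly the single large-item bin have load above $\frac 23$, which is the hypothesis of Claim \ref{con1} with that bin designated bad.

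The main obstacle is this last case, and specifically the argument that a non-large inner bin cannot have small load when $\frac 13<\theta\le\frac 12$; the care needed is in distinguishing the load of a bin at the time $\theta$ was processed from its final load, and in exploiting the decreasing order of FFD to pin such a bin down to exactly two medium-size items. The other two cases follow essentially immediately from the rule for opening a new bin.
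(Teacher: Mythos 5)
Your proof is correct and follows essentially the same route as the paper: a case split on $\theta$ at $\frac 13$, using the bin-opening rule to bound loads when $\theta\le\frac 13$, and the prefix property plus the observation that a non-large inner bin must contain two items of size above $\frac 13$ in the remaining case. You merely spell out in more detail (and as a separate case for $\theta>\frac 12$) what the paper compresses into its two-sentence "otherwise" argument.
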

\begin{proof} If $\theta \leq \frac 13$, by the second part of
Claim \ref{was3}, all inner bins have loads above $\frac 23$.

Otherwise, all items that arrived before $\theta$ have sizes above
$\frac 13$. Every inner bin without a large item has exactly two
items of sizes in $(\frac 13,\frac 12]$. If $\tau \leq 1$, the
condition of Claim \ref{con1} holds and otherwise the condition of
Claim \ref{con2} holds.
\end{proof}

\begin{proposition}
The price of clustering is at most 1.95.
\end{proposition}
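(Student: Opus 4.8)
The plan is to combine the weight inequalities already assembled in the preceding claims. We have a global upper bound $W \le 1.95 \cdot OPT$ from the first claim, and for the clustered side we want a matching lower bound of the form $W \ge \sum_{i=1}^{\ell} A_i$. The final claim tells us that for each cluster $i$, either the hypothesis of Claim~\ref{con1} or that of Claim~\ref{con2} is satisfied; in either case the respective claim yields $W_i \ge A_i$. Summing over all clusters gives $W = \sum_{i=1}^{\ell} W_i \ge \sum_{i=1}^{\ell} A_i$.

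It then remains only to chain the inequalities with the second claim, which gives $\sum_i OPT_i \le \sum_i A_i$. Putting everything together,
\[
\sum_{i=1}^{\ell} OPT_i \;\le\; \sum_{i=1}^{\ell} A_i \;\le\; W \;\le\; 1.95 \cdot OPT \ .
\]
Since $\sum_i OPT_i$ is exactly the cost of the optimal clustered solution (each cluster is packed independently and optimally), this shows the ratio of the optimal clustered cost to the globally optimal cost is at most $1.95$, which is the desired bound on the price of clustering.

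There is essentially no remaining obstacle: all the work has been front-loaded into Claims~\ref{con1}, \ref{con2}, and the case-analysis claim. The only point that needs a sentence of justification is that the per-cluster bound $W_i \ge A_i$ genuinely follows for \emph{every} cluster from the disjunction supplied by the last claim — one applies whichever of Claim~\ref{con1} or Claim~\ref{con2} is triggered. The proof is therefore a two-line aggregation, and I would write it as exactly that: invoke the last claim cluster by cluster to get $W_i \ge A_i$, sum, and then concatenate with the second and first claims to reach $1.95 \cdot OPT$.
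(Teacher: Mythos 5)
Your proposal is correct and matches the paper's proof exactly: the paper's argument is the same one-line chain $\sum_i OPT_i \leq \sum_i A_i \leq \sum_i W_i = W \leq 1.95\cdot OPT$, with the per-cluster bound $W_i \geq A_i$ supplied by Claims~\ref{con1} and \ref{con2} via the case-analysis claim. Nothing further is needed.
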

\begin{proof}
We have $\sum_{i=1}^\ell OPT_i \leq \sum_{i=1}^{\ell} A_i \leq
\sum_{i=1}^{\ell} W_i = W \leq 1.95 \cdot OPT $.
\end{proof}

\medskip

We proceed to an improved analysis. Intuitively, a bad structure
of clusters is that used in the lower bound construction, that is,
clusters consist of items of similar sizes, some of which are slightly smaller than a
given reciprocal of an integer and some slightly larger than this value.
Our improved weight function is based on dealing with such
clusters, and in particular, such clusters for items that are
relatively large.

For simplicity, we will use the same notation, and use $w$ as the
name of our new weight function, and the function is also based on
item sizes $w(x):[0,1)\rightarrow (0,\frac{7297}{3900}\approx
1.871026]$ as follows.
\begin{equation*}
\displaystyle w(x)=\frac {21}{13} \cdot x+\begin{cases}
\frac{997}{3900}\approx 0.255641 {  \ \ \  \ \ \ \ \ \ \  \ \ \ \
\ \ \mbox {for} \ \ \ \ \  \  \ \ \ } x>\frac 12, \vspace{5pt}
\\

\vspace{5pt}

\frac{64}{975}=\frac{256}{3900} \approx 0.065641{ \ \  \ \ \ \ \ \
\mbox {for} \ \ \ \ \  \  \ \ \ } \frac 13 < x
\leq \frac 12, \\
\vspace{5pt}

\frac{18}{325}=\frac{216}{3900} \approx 0.0553846{ \ \ \ \ \  \ \
\mbox {for} \ \ \ \ \  \  \ \ \ } \frac 14 < x
\leq \frac 13, \\
\vspace{5pt}

\frac{2}{195}=\frac{40}{3900}\approx 0.01025641{ \ \ \  \ \ \
\mbox {for}  \ \ \ \ \ \ \  \ \ } \frac 16 < x
\leq \frac 14, \\
\vspace{5pt}

0 { \ \ \ \ \ \ \ \ \ \ \ \ \ \ \ \ \ \ \ \ \  \ \ \ \ \   \ \ \ \
\ \ \ \ \ \ \mbox {for} \ \ \ \ \ \ \ \ \ } 0 \leq x \leq \frac
16.
\end{cases}
\end{equation*}

The values after the equality sign in the definition are called
bonuses, and their values are between zero and $\frac{997}{3900}$.
Next, we find an upper bound on the total weight.
\begin{claim} We have $W \leq \frac{581}{300} \cdot OPT$, where $\frac{581}{300}\approx 1.93667$.\end{claim}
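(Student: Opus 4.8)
The claim is an upper bound $W \leq \frac{581}{300}\cdot OPT$, where $W$ is the total weight under the new weight function. Following the structure of the simple analysis above, the natural approach is to prove the bound bin-by-bin in a globally optimal solution: it suffices to show that every valid bin $\mathcal{B}$ (load at most $1$) satisfies $\sum_{j\in\mathcal{B}} w(s_j)\le \frac{581}{300}$, and then sum over the $OPT$ bins. Since $w(x)=\frac{21}{13}x+b(x)$ where $b(x)$ is the bonus (a step function of $x$), the weight of a bin is $\frac{21}{13}\cdot(\text{load})+\sum_{j\in\mathcal{B}}b(s_j)\le \frac{21}{13}+\sum_{j\in\mathcal{B}}b(s_j)$, so the whole problem reduces to bounding the total bonus $\sum_{j\in\mathcal{B}}b(s_j)$ in any single valid bin.

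First I would enumerate, by size class, how many items of each class can coexist in a bin: at most one large item ($s>\frac12$), at most two items in $(\frac13,\frac12]$, at most three in $(\frac14,\frac13]$, at most five in $(\frac16,\frac14]$, and arbitrarily many in $[0,\frac16]$ (which contribute zero bonus). A crude bound would be $\frac{997}{3900}+2\cdot\frac{256}{3900}+3\cdot\frac{216}{3900}+5\cdot\frac{40}{3900}$, but this ignores that the sizes must fit together with total load $\le 1$, so I would instead set up the maximization of $\sum b(s_j)$ as a small combinatorial optimization: choose a multiset of items from these classes whose sizes sum to at most $1$, maximizing the total bonus. The key structural observations are that (i) packing a large item leaves little room — at most room for items summing to $<\frac12$ — and (ii) within each non-large class the bonus per item is essentially proportional (the classes were presumably chosen so that the "bonus density" $b(x)/x$ is roughly constant across the breakpoints, mirroring the $1.691\ldots$ harmonic structure), so the worst case is when the bin is packed with items from a single class, or with a large item plus a tight packing of one smaller class. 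I would check the handful of candidate configurations — e.g. one large plus one item just under $\frac13$ plus one item just under $\frac16$; two items just under $\frac12$; one item just under $\frac12$ plus one just under $\frac13$ plus items near $\frac16$; three items near $\frac13$; etc. — and verify each gives total weight at most $\frac{581}{300}$.

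The main obstacle is getting the case analysis tight rather than merely bounded: the constant $\frac{581}{300}=1.93\overline{6}$ is close to the lower bound $1.93558$, so the bonus budget has essentially no slack, and a naive "at most $k_c$ items of class $c$" count will overshoot. The arithmetic has to use the actual size constraints — for instance, two items of size just above $\frac13$ already total more than $\frac23$, leaving under $\frac13$ of room, which may or may not admit a third bonus-bearing item depending on whether that room is $>\frac14$ or $>\frac16$; these threshold comparisons are exactly where the specific numerical values of the breakpoints and bonuses matter. So I expect the proof to consist of splitting on the size of the largest item in the bin (large / in $(\frac13,\frac12]$ / in $(\frac14,\frac13]$ / in $(\frac16,\frac14]$ / all $\le\frac16$), and in each case using the residual capacity to bound how much further bonus can be accrued, with the final inequality $\frac{21}{13}(\text{load}) + (\text{total bonus}) \le \frac{581}{300}$ verified by direct computation in each branch. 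Summing over all bins of $OPT$ then yields $W\le\frac{581}{300}\cdot OPT$.
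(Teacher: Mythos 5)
Your proposal matches the paper's proof in both structure and substance: the paper also bounds the weight bin-by-bin in $OPT$, splitting the linear part $\frac{21}{13}\cdot(\text{load})\le\frac{21}{13}$ from the bonuses and using the residual-capacity argument (a large item leaves room below $\frac12$, so at most two further items of size above $\frac16$, giving the tight configuration $\frac{997}{3900}+\max\{\frac{256}{3900},\frac{216}{3900}+\frac{40}{3900}\}$). The only cosmetic difference is that for bins with no large item the paper dispenses with the configuration enumeration by bounding the ratio $\frac{w(x)}{x}\le\frac{597}{325}<1.84$ directly.
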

\begin{proof} Consider a bin $\cal{B}$ of OPT. The total size of items
is at most $1$. We consider items of $\cal{B}$ of sizes above
$\frac 1 6$, as the weight of any other item is just $\frac
{21}{13}$ times its size. If there is no large item, the ratio
$\frac{w(x)}{x}$ does not exceed $\frac{597}{325}<1.84$, so the
total weight for $\cal{B}$ is below $1.9$.

If there is a large item, there can be at most two other items of
sizes above $\frac 16$. If there is also an item of size in
$(\frac 13,\frac 12]$, these are the only two items with positive bonuses. Otherwise, if there
is also an item of size in $(\frac 14,\frac 13]$, there can be
another item of size in $(\frac 16,\frac 14]$. In other cases the total weight is smaller.
Thus, the total
weight is at most
$$\frac{21}{13}+\frac{997}{3900}+\max\{\frac{256}{3900},\frac{216}{3900}+\frac{40}{3900}\}=\frac{21}{13}+\frac{1253}{3900}=\frac{7553}{3900}=\frac{581}{300} \ . $$
\end{proof}

Once again, we show that for every cluster, it holds that $W_i
\geq A_i$. We use the index $\tau$ and the size $\theta$ as
before.

\begin{lemma}
Given a cluster $i$, it holds that $W_i \geq A_i$.
\end{lemma}
\begin{proof}
If the load of any inner bin, possibly excluding one inner bin, is
at least $\frac 67$, since the total load of any inner bin and the
last inner bin is above $1$, we get $$S_i \geq 1 + \frac
67(A_i-2)=\frac{6A_i-5}7 \ . $$ Thus, $$W_i\geq \frac {21}{13}
\cdot \frac{6A_i-5}7=\frac{18A_i-15}{13}=A_i+5\cdot
\frac{A_i-3}{13} \geq A_i \ , $$ since $A_i \geq 3$. Thus, we
assume that at least two inner bins have loads below $\frac 67$.

We split the analysis into  several cases. In the case $\theta \leq \frac 17$, the
load of any inner bin is above $1-\theta \geq \frac 67$, so this
case was already excluded.

In the case $\theta \in (\frac 17,\frac 16]$, the load of every
inner bin is at least $\frac 56$, and the total size satisfies
$$S_i \geq 1+\frac 56 (A_i-2)=\frac {5A_i-4}6 \ . $$ We have $$\frac
{21}{13}\cdot S_i \geq 7 \cdot
\frac{5A_i-4}{26}=A_i+\frac{9A_i-28}{26}\geq A_i-\frac
{150}{3900} \ , $$ by $A_i \geq 3$. If there is at least one item of
size above $\frac 14$, or at least four items with positive
bonuses, we are done. Otherwise, there are at most three items of sizes
above $\frac 16$, all of which are not larger than $\frac 14$, and every inner bin, except for possibly the
first one, has six items of sizes in $(\frac 17,\frac 16]$, so the
loads are above $\frac 67$, contradicting the assumption.

In the case $\theta \in (\frac 16,\frac 15]$, the load of every
inner bin is at least $\frac 45$, and the total size satisfies
$$S_i \geq 1+\frac 45(A_i-2)=\frac {4A_i-3}5 \ . $$ We have $$\frac
{21}{13}\cdot S_i \geq 21 \cdot
\frac{4A_i-3}{65}=A_i+\frac{19A_i-63}{65}\geq A_i-\frac
{360}{3900} \ , $$ by $A_i \geq 3$. The last bin has an item of bonus
$\frac {40}{3900}$. Every inner bin with an item of size above
$\frac{1}{4}$ has a bonus of at least $\frac{216}{3900}$, and
every inner bin without such an item has at least four items of
sizes in $(\frac 16,\frac 14]$, so the total bonus is at least
$\frac{160}{3900}$. The total bonus is therefore at least
$\frac{360}{3900}$, and $W_i \geq A_i$.

In the case $\theta \in (\frac 15,\frac 14]$, the load of every
inner bin is at least $\frac 34$, and the total size satisfies
$$S_i \geq 1+\frac 34(A_i-2)=\frac {3A_i-2}4 \ . $$ We have $$\frac
{21}{13}\cdot S_i \geq 21 \cdot
\frac{3A_i-2}{52}=A_i+\frac{11A_i-42}{52}\geq A_i-\frac
{675}{3900} \ , $$ by $A_i \geq 3$. If there is a large item, we are
done. Otherwise, if there are at least three items of sizes in
$(\frac 14,\frac 12]$, since the last bin has an item of bonus
$\frac {40}{3900}$, the total bonus is at least
$\frac{688}{3900}$. Otherwise, all items of sizes above $\frac 14$
(at most two such items) are packed into the first inner bin, and every inner bin except
for the first one has at least four items of sizes in $(\frac 15,
\frac 14]$, and its load is above $\frac 45$. As in the case
$\theta \in (\frac 16,\frac 15]$, we have $\frac {21}{13}\cdot
S_i\geq A_i-\frac {360}{3900}$, and the calculation of bonuses it
also the same as in that case.

In the case $\theta \in (\frac 14,\frac 13]$, the load of every
inner bin is at least $\frac 23$, and the total size satisfies
$S_i \geq 1+\frac 23  (A_i-2)=\frac {2A_i-1}3$. We have
$$\frac{21}{13} \cdot
S_i=\frac{14A_i-7}{13}=A_i+\frac{A_i-7}{13}\geq
A_i-\frac{1200}{3900} \ .$$ The bonus of the item of the last bin is
$\frac{216}{3900}$, so if there is also a large item, the total
bonus is at least
$\frac{216}{3900}+\frac{997}{3900}=\frac{1213}{3900}$ and we are
done. Otherwise, if all inner bins together have at least five items of sizes
in $(\frac 14,\frac 12]$, the total bonus is at least
$\frac{1296}{3900}$. We are left with the case where there are at
most four such items, and in fact there are exactly four such items,
as every inner bin has at least two such items. If the second
inner bin does not have an item of size above $\frac 13$, then it
has three items of sizes above $\frac 14$, so this case is
impossible. Thus, the first bin has two items of sizes above
$\frac 13$, and the second bin has at least one such item. The
total bonus in this case is $3\cdot \frac{256}{3900}+2\cdot
\frac{216}{3900}=\frac{1200}{3900}$.

In the case $\theta > \frac 13$, we use the value of $\tau$ in the
analysis. We consider the first inner bin together with the last
bin. Every inner bin that is not in the prefix of first $\tau$
inner bins has two items of sizes in $(\frac 13,\frac 12]$. The
bins of indices $2,\ldots,\tau$ have large items, and the first
inner bin either has a large item or two items of sizes in $(\frac
13,\frac 12]$, so its bonus is at least $\frac{512}{3900}$.

If $\tau \leq 1$, we get $S_i \geq 1+(A_i-2)\frac
23=\frac{2A_i-1}3$ and $$W_i \geq \frac {21}{13} \cdot
\frac{2A_i-1}3 +(2A_i-1)\frac {256}{3900}=\frac {4712}{3900}
A_i-\frac {2356}{3900} =A_i+\frac{812A_i-2356}{3900}\geq
A_i+\frac{2436-2356}{3900} > A_i  \ . $$

Otherwise, $\tau \geq 2$, and the first inner bin has a large
item. Thus, $$S_i \geq 1+(\tau-1)\cdot \frac 12+(A_i-\tau-1)\cdot
\frac 23=\frac{2A_i}3-\frac 16-\frac{\tau}6 \ , $$ and
$$W_i \geq \frac
{21}{13}\cdot(\frac{2A_i}3- \frac
16-\frac{\tau}6)+(2(A_i-\tau-1)+1)\cdot
\frac{256}{3900}+\tau\cdot\frac{997}{3900}=\frac{4712}{3900}\cdot
A_i-\frac{565\tau}{3900}-\frac{1306}{3900}\ . $$
By $\tau \leq A_i-1$,
we have $$W_i \geq \frac{4712}{3900}\cdot
A_i-\frac{565(A_i-1)}{3900}-\frac{1306}{3900}
=A_i+\frac{247A_i-741}{3900}\geq A_i \ . $$
\end{proof}

\begin{theorem}
The price of clustering is at most $\frac{581}{300}\approx
1.93667$.
\end{theorem}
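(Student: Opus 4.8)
The plan is to assemble the theorem from the three ingredients already in place: the per-bin upper bound on total weight, the per-cluster lower bound $W_i\ge A_i$, and the additivity of the weight function over the cluster partition. Concretely, I would observe first that the cost of an optimal clustered solution is exactly $\sum_{i=1}^{\ell}OPT_i$, since each cluster is forced to use its own bins. Applying the claim relating cluster-wise optima to FFD gives $\sum_{i=1}^{\ell}OPT_i\le\sum_{i=1}^{\ell}A_i$. Then the Lemma, applied to every cluster $i$, yields $\sum_{i=1}^{\ell}A_i\le\sum_{i=1}^{\ell}W_i$. Because $w$ is defined on item sizes and the $I_i$ partition $I$, we have $\sum_{i=1}^{\ell}W_i=W$. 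Finally the claim bounding total weight against the global optimum gives $W\le\frac{581}{300}\cdot OPT$. Chaining,
$$\sum_{i=1}^{\ell}OPT_i\ \le\ \sum_{i=1}^{\ell}A_i\ \le\ \sum_{i=1}^{\ell}W_i\ =\ W\ \le\ \frac{581}{300}\cdot OPT,$$
so the ratio between the clustered optimum and the global optimum is at most $\frac{581}{300}$, as claimed.

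The substantive work, already carried out, is the Lemma: for a single cluster $i$ one must show that the weight $W_i=\frac{21}{13}S_i+(\text{bonuses})$ is at least $A_i$, the number of FFD bins. The strategy there is to use the structural facts about FFD from Claim~\ref{was3} — every pair of bins has load above $1$, large items occupy a prefix, and a late bin whose first item exceeds $\frac1s$ forces earlier "light" bins to carry $s-1$ items in $(\frac1s,\frac1{s-1}]$ — to lower bound $S_i$ in terms of $A_i$ according to the size $\theta$ of the first item of the last bin. The size contribution alone already gives $\frac{21}{13}S_i\ge A_i-c$ for a small deficit $c$ depending on the case, and the bonuses are tuned so that in each case enough inner bins contain items in the relevant size windows to recover the deficit $c$; the use of $A_i\ge 3$ absorbs the remaining slack.

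The only delicate point — and the reason the weight function has four separate bonus thresholds — is that the per-cluster bound must survive precisely the adversarial clusters from the lower bound construction, namely clusters whose items cluster just above a reciprocal $\frac1s$ so that $s-1$ of them, but not $s$, fit in a bin. I expect the main obstacle to be verifying that in the regimes $\theta\in(\frac14,\frac13]$, $\theta\in(\frac15,\frac14]$, $\theta\in(\frac16,\frac15]$ and $\theta>\frac13$ the counting of positive-bonus items across the inner bins is exactly tight against the size deficit; this is where the specific constants $\frac{997}{3900},\frac{256}{3900},\frac{216}{3900},\frac{40}{3900}$ are forced. Once the Lemma is granted, the theorem itself is the one-line chain of inequalities above.
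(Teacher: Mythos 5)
Your proposal is correct and is essentially identical to the paper's proof: the theorem is exactly the chain $\sum_i OPT_i \le \sum_i A_i \le \sum_i W_i = W \le \frac{581}{300}\cdot OPT$, assembled from the claim comparing cluster optima to FFD, the per-cluster lemma $W_i \ge A_i$, additivity of the weight over the partition, and the per-bin weight bound against the global optimum. Your commentary on where the lemma's difficulty lies (the case analysis on $\theta$ and the tuning of the bonuses) accurately reflects the paper's argument as well.
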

\begin{proof}
We have $\sum_{i=1}^\ell OPT_i \leq \sum_{i=1}^{\ell} A_i \leq
\sum_{i=1}^{\ell} W_i = W \leq \frac{581}{300} \cdot OPT $.
\end{proof}

\subsection{The price of clustering for larger parameters $\boldsymbol{k\geq 4}$}
\label{xxx} In this section we briefly discuss the case of larger
$k$, that is, the case where for a given integer $k\geq 4$, it is
known that the optimal solution for every cluster has cost not
smaller than $k$.

The lower bound has a similar structure in the sense that items
type are similar. In the case $k=3$, half of the bins of a
globally optimal solution two items of sizes close to $\frac 12$,
while here only a fraction of $\frac{1}{k-1}$ of the bins will be
such. In the clustered solution, clusters with items of sizes
approximately $\frac 13$ will still have two items of sizes below
$\frac 13$, but the number of items of sizes above $\frac 13$ will
be $2k-3$. For clusters with items of sizes approximately $\frac
16$, there are still five items of size below $\frac 16$ in each
such cluster, but there are $5k-9$ items of sizes above $\frac
16$. For items of sizes just above $\frac 1{t}$ ($t=7,43,1807$),
still the last bin of the cluster will have just one item, but
there are $k$ bins, so the number of items will be $k\cdot
(t-1)-(t-2)=(k-1)\cdot (t-1)+1$.

The resulting numbers of items (up to negligible constants) are as
follows. The number of items of sizes just above $\frac 12$ is $N$
and the number of items of sizes just below $\frac 12$ is
$\frac{N}{k-1}$, items of sizes just above $\frac 13$:
$\frac{(2k-3)(k-2)N}{(k-1)(2k-1)}$, items of sizes just below
$\frac 13$: $\frac{2(k-2)N}{(k-1)(2k-1)}$, items of sizes just
above $\frac 16$: $\frac{2(k-2)N}{(k-1)(2k-1)}$, items of sizes
just below $\frac 16$: $\frac{10(k-2)N}{(k-1)(2k-1)(5k-9)}$, and
items of sizes just above $\frac 17$, $\frac 1{43}$, and
$\frac{1}{1807}$:
$\frac{(k-2)(10k^2-33k+17)N}{(k-1)(2k-1)(5k-9)}$.

\begin{proposition}
The lower bound on the price of clustering for a given value
$k\geq 4$ is $$\frac
{k}{k-1}+\frac{k(k-2)}{(k-1)(2k-1)}+\frac{2k(k-2)}{(k-1)(2k-1)(5k-9)}$$
$$+\frac{k(10k^3-53k^2+83k-34)(\frac 1{6k-5}+\frac 1{42k-41}+\frac 1{1806k-1805})}{(k-1)(2k-1)(5k-9)}  \ . $$
\end{proposition}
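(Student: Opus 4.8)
The plan is to mirror the $k=3$ lower bound construction given above, adjusting the multiplicities and the perturbations to the parameter $k$. Fix large integers $N$ and $M$, with $N$ divisible by a factorial-type constant times a suitable $k$-dependent power of a fixed integer (so that every multiplicity below is an integer), and a tiny $\mu>0$. As for $k=3$ one uses a hierarchy of perturbations whose magnitude grows geometrically in an index $i\in\{1,\dots,M\}$ (playing the role of $3^{N+3i}\mu$) together with a common small negative offset on the near-$\frac16$ items; the point of this choice is that within a cluster every critical sub-collection overshoots $1$, while in the global solution the intended cross-type triples cancel back below $1$. The items are: $N$ items of size just above $\frac12$ and $\frac N{k-1}$ just below $\frac12$, grouped into clusters with $k-1$ positive and one negative item each; a geometric family of positive/negative items near $\frac13$, in clusters with $2k-3$ positive and $2$ negative items each; a similar family near $\frac16$, in clusters with $5k-9$ positive and $5$ negative items each; and for each $t\in\{7,43,1807\}$, items of size $\frac1t+\mu$ in blocks of $(k-1)(t-1)+1$ (equal to $6k-5$, $42k-41$, $1806k-1805$, respectively). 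These specific counts are dictated by requiring the global packing below to use exactly $N$ bins.

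First I would exhibit the global solution and check that it uses $N$ bins; since the input contains $N$ items larger than $\frac12$, the global optimum is then $N$ (up to negligible lower-order terms). Pair $\frac N{k-1}$ of the items just above $\frac12$ with all the items just below $\frac12$, one pair per bin (using a matching index offset so these pairs have size at most $1$ even though no such pair sits in a common cluster), and place each of the remaining $\frac{(k-2)N}{k-1}$ items above $\frac12$ in its own bin; this uses $N$ bins, of which $\frac{(k-2)N}{k-1}$ have free space about $\frac12$. The number of items near $\frac13$ works out to $\frac{(k-2)N}{k-1}$, exactly one per half-empty bin; adding it leaves free space about $\frac16$, which receives a near-$\frac16$ item of the matching index for all but the extreme indices (these triples sum to just below $1$ by the perturbation cancellation), while the bins carrying the largest near-$\frac13$ items instead each receive one item of each of the types $\frac17,\frac1{43},\frac1{1807}$, which fits since $\frac12+\frac13+\frac17+\frac1{43}+\frac1{1807}<1$. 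The counts match up to $O(1)$ items, which are absorbed into existing bins and clusters as in the $k=3$ argument.

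Next I would show that every cluster has optimal cost exactly $k$, so that the optimal clustered solution costs exactly $k$ times the number of clusters and the input is valid (each cluster needs at least $3\le k$ bins). Each cluster has $(j-1)(k-1)+1$ items for the appropriate $j\in\{2,3,6,7,43,1807\}$; for $j\le 6$ exactly $j-1$ of them are negative (slightly below $\frac1j$) and the rest positive, while for $j\ge 7$ all of them equal $\frac1j+\mu$. In every case any $j$ items of the cluster sum to strictly more than $1$: for $j\ge 7$ because $j(\frac1j+\mu)>1$, and for $j\le 6$ because such a $j$-subset must contain a positive item (there are only $j-1$ negatives) and the perturbations are chosen so the total is $1$ plus a positive multiple of the leading perturbation minus a strictly smaller offset. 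Hence a bin holds at most $j-1$ items of the cluster, so at least $k$ bins are needed, and $k$ bins clearly suffice; the optimum per cluster is exactly $k$.

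Finally, counting clusters: there are $\frac N{k-1}$ near $\frac12$, $\frac{(k-2)N}{(k-1)(2k-1)}$ near $\frac13$, $\frac{2(k-2)N}{(k-1)(2k-1)(5k-9)}$ near $\frac16$, and $\frac{(k-2)(10k^2-33k+17)N}{(k-1)(2k-1)(5k-9)((k-1)(t-1)+1)}$ for each $t\in\{7,43,1807\}$. Multiplying the sum of these by $k$, dividing by the global optimum $N$, and letting $M,N\to\infty$ (the geometric tails vanish and the $O(1)$ corrections become negligible) yields exactly the displayed expression as the limiting ratio, hence the claimed lower bound, where one uses the identity $(k-2)(10k^2-33k+17)=10k^3-53k^2+83k-34$. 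I expect the main difficulty to lie in the coordination between the last two parts: one has to commit to a hierarchy of perturbation scales, and to the sign and size of the offset on the near-$\frac16$ items, so that simultaneously every within-cluster critical sum is pushed above $1$ and all the global triples cancel below $1$, and then verify the integrality of each multiplicity. Once the construction is pinned down, the rest is routine bookkeeping with finite geometric sums, exactly parallel to the completed $k=3$ analysis.
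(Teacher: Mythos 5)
Your construction is exactly the one the paper intends: the same item types and perturbation hierarchy as the $k=3$ instance, with cluster compositions $k-1$ positives plus one negative near $\frac12$, $2k-3$ plus $2$ near $\frac13$, $5k-9$ plus $5$ near $\frac16$, and blocks of $(k-1)(t-1)+1$ for $t\in\{7,43,1807\}$, and your counts reproduce the paper's item multiplicities and sum correctly to a global optimum of $N$ bins (using $2(5k-4)+10k^2-33k+17=(2k-1)(5k-9)$). The bookkeeping yielding the displayed formula matches, so the proposal is correct and follows the paper's own (sketched) argument.
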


Since this generalizes the case $k=3$, indeed for $k=3$ we get the
earlier lower bound of $1.9355858244424$. For $k=4,5,6,7,8,9$, and
$10$, the approximate lower bounds are 1.8781318, 1.8410851,
1.815945, 1.7979, 1.78437, 1.77386, and 1.76546, respectively.

The lower bound for $k$ growing to infinity is only approximately
$1.6910299$ since we did not use the entire series but only the
first few elements of the sequence $c_i$ defined earlier.

\medskip

It is possible to show close upper bounds for other values of $k$
as well. As an example, we show a close upper bound for $k=4$.
Once again, we expect the worst case for clusters to be of the
same form as before, but there will be another relatively full bin
in every cluster.

We will use the same notation once more, and use $w$ as the name
of our new weight function, and the function is also based on item
sizes. Let $\Delta=77805$.
\begin{equation*}
\displaystyle w(x)=\frac {28}{19} \cdot x+\begin{cases} \alpha=
\frac{25124}{\Delta}\approx 0.32291 {  \ \ \  \ \ \ \ \ \ \ \ \ \
\ \mbox {for} \ \ \ \  \  \ \ \ } x>\frac 12, \vspace{5pt}
\\

\vspace{5pt}

\beta=\frac{6528}{\Delta}\approx 0.083902
 { \  \ \ \  \ \ \  \ \ \  \ \ \  \ \mbox {for} \ \
\ \ \ \ \ \ } \frac 13 < x
\leq \frac 12, \\
\vspace{5pt}

\gamma=\frac{5520}{\Delta}\approx 0.0709466{ \ \ \ \ \  \  \  \ \
\ \ \ \  \mbox {for} \ \ \ \ \ \ \  \ } \frac 14 < x
\leq \frac 13, \\
\vspace{5pt}

\delta=\frac{1008}{\Delta}\approx 0.0129555{   \ \ \  \ \  \ \ \ \
\ \ \ \ \mbox {for}  \ \ \ \ \ \  \  \ } \frac 16 < x
\leq \frac 14, \\
\vspace{5pt}

0 { \ \ \  \ \ \ \ \   \ \ \  \ \ \ \ \ \   \ \ \ \ \ \ \ \ \ \ \
\ \ \ \ \ \ \  \ \ \  \  \mbox {for}  \ \  \ \ \ \ \ \ } 0 \leq x
\leq \frac 16.
\end{cases}
\end{equation*}

Let $\lambda=146312$, where $\frac{\lambda}{\Delta}\approx
1.880496112076$. We have $\frac{28}{19}\approx 1.47368421$.

The values after the equality sign in the definition are called
bonuses again. We find an upper bound on the total weight.
\begin{claim} We have $W \leq \frac{\lambda}{\Delta} \cdot OPT$.\end{claim}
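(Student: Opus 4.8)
The plan is to copy, with new constants, the argument used above for the claim $W\le\frac{581}{300}\cdot OPT$. Fix an arbitrary bin $\mathcal{B}$ of the globally optimal solution; its load is at most $1$. Since $w(x)=\frac{28}{19}x+\mathrm{bonus}(x)$, the linear part contributes at most $\frac{28}{19}$ over $\mathcal{B}$, and because $19\mid\Delta$ (indeed $\frac{28}{19}=\frac{114660}{\Delta}$), it suffices to prove that the bonuses of the items of $\mathcal{B}$ sum to at most $\frac{\lambda}{\Delta}-\frac{28}{19}=\frac{31652}{\Delta}=\alpha+\beta$. Summing this bound over all bins of $OPT$ then gives $W\le\frac{\lambda}{\Delta}\cdot OPT$.

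First I would handle the case that $\mathcal{B}$ has no large item. Then every item has size at most $\frac12$, and I would check that on $[0,\frac12]$ the ratio $\frac{w(x)}{x}$ is maximized for sizes just above $\frac14$, where it equals $\frac{28}{19}+4\gamma=\frac{136740}{\Delta}$, which is below $\frac{\lambda}{\Delta}$; since the total size in $\mathcal{B}$ is at most $1$, the total weight of $\mathcal{B}$ is below $\frac{\lambda}{\Delta}$.

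Second, the case that $\mathcal{B}$ contains a large item. Two large items cannot share a bin, so there is exactly one; it contributes bonus $\alpha$ and leaves free space strictly below $\frac12$. I would then enumerate the configurations of the remaining positive-bonus items, ordered by the largest such item: (i) one item in $(\frac13,\frac12]$ (at most one, two would exceed $\frac12$), after which the residual space is below $\frac16$, so no further positive bonus — total extra bonus $\beta$; (ii) one item in $(\frac14,\frac13]$ together with at most one item in $(\frac16,\frac14]$ (two items above $\frac14$, or that item plus two above $\frac16$, would overflow) — total extra bonus $\gamma+\delta$, which equals $\beta$ by the choice of constants, $5520+1008=6528$; (iii) at most two items in $(\frac16,\frac14]$ (three would exceed $\frac12$) — total extra bonus $2\delta<\beta$; and (iv) no positive-bonus item among the rest. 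In each case the total bonus is at most $\alpha+\beta=\frac{31652}{\Delta}$, hence $W_{\mathcal{B}}\le\frac{\lambda}{\Delta}$.

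The delicate part is the case checking in the last two paragraphs: verifying the size constraints that rule out a third ``medium'' item in each configuration, and confirming that the bonuses were tuned so that configuration (ii) costs exactly as much as configuration (i) ($\gamma+\delta=\beta$) and so that the no-large-item ratio $\frac{28}{19}+4\gamma$ stays below $\frac{\lambda}{\Delta}$ — this is precisely what pins down $\Delta=77805$ and the values $\alpha,\beta,\gamma,\delta$. Beyond this bookkeeping there is no real obstacle, as the structure is identical to the $k=3$ claim already proved.
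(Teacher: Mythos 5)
Your proof is correct and follows essentially the same approach as the paper: bound the linear part by the bin's load and then bound the total bonus by a case analysis on whether a large item is present, arriving at $\frac{28}{19}+\alpha+\max\{\beta,\gamma+\delta\}=\frac{\lambda}{\Delta}$. The paper's version is terser (it defers to the analogous $k=3$ computation), but your enumeration of configurations and the verification $\gamma+\delta=\beta$ and $\frac{28}{19}+4\gamma<\frac{\lambda}{\Delta}$ match its argument exactly.
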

\begin{proof} Consider a bin $\cal{B}$ of OPT. The total size of items
is at most $1$. We consider items of $\cal{B}$ of sizes above
$\frac 1 6$, as the weight of any other item is just $\frac
{28}{19}$ times its size. If there is no large item, the ratio
$\frac{w(x)}{x}$ does not exceed $1.8$, so the total weight for
$\beta$ is below $\frac{\lambda}{\Delta}$. The remaining case,
similarly to the calculation for $k=3$ yields
$\frac{28}{19}+\alpha+\max\{\beta,\gamma+\delta\}=\frac{\lambda}{\delta}$.
\end{proof}

Once again, we show that for every cluster, it holds that $W_i
\geq A_i$. We use the index $\tau$ and the size $\theta$ as
before.

\begin{lemma}
Given a cluster $i$, it holds that $W_i \geq A_i$.
\end{lemma}
\begin{proof}
If the load of any inner bin, possibly excluding one inner bin, is
at least $\frac 67$, once again we get $S_i \geq (A_i-2)\cdot
\frac 67+ 1=\frac{6A_i-5}7$. Thus, $$W_i\geq \frac {28}{19} \cdot
\frac{6A_i-5}7=\frac{24A_i-20}{19}=A_i+5\cdot \frac{A_i-4}{19}
\geq A_i \ ,$$ since $A_i \geq 4$. Thus, we assume that at least
two inner bins have loads below $\frac 67$, and therefore we can
 assume that $\theta > \frac 17$ holds again.

In the case $\theta \in (\frac 17,\frac 16]$, we found $S_i \geq
\frac {5A_i-4}6$, and we have $$\frac {28}{19}\cdot S_i \geq 14
\cdot \frac{5A_i-4}{57}=A_i+\frac{13A_i-56}{57}\geq A_i-\frac
{4}{57} $$ by $A_i \geq 4$. If $A_i \geq 5$, we already get $\frac
{28}{19}\cdot S_i \geq A_i$, so we focus on the case $A_i=4$ for
the current range of $\theta$.

As $\gamma>\frac 4{57}$ and $6 \cdot \delta >\frac 4{57}$, if
there is at least one item of size above $\frac 14$, or at least
six items with positive bonuses, we are done. Otherwise, there are
at most five items of sizes above $\frac 16$, all of which are not
larger than $\frac 14$, so the third inner bin has six items of
sizes in $(\frac 17,\frac 16]$, and more specifically, all six
items of sizes at least $\theta$. This gives us another lower
bound on the total size: $$S_i \geq
(A_i-2)(1-\theta)+6\theta+\theta=2+5\theta \geq
2+\frac{5}{7}=\frac{19}{7} \ , $$ and $\frac{28}{19} S_i \geq 4$.

In the case $\theta \in (\frac 16,\frac 15]$, the total size
satisfies $S_i \geq \frac {4A_i-3}5$, and we have
$$\frac {28}{19}\cdot S_i \geq 28 \cdot
\frac{4A_i-3}{95}=A_i+\frac{17A_i-84}{95}\geq A_i-\frac {16}{95}$$
by $A_i \geq 4$. The last bin has an item of bonus $\delta$. Every
inner bin with an item of size above $\frac{1}{4}$ has a bonus of
at least $\gamma$, and every inner bin without such an item has at
least four items of sizes in $(\frac 16,\frac 14]$, so the total
bonus is at least $4\delta$. The total bonus is therefore at least
$13\delta=\frac{16}{95}$, and $W_i \geq A_i$.

In the case $\theta \in (\frac 15,\frac 14]$, the total size
satisfies $S_i \geq \frac {3A_i-2}4$. We have $$\frac
{28}{19}\cdot S_i \geq 7 \cdot
\frac{3A_i-2}{19}=A_i+\frac{2A_i-14}{19}\geq A_i-\frac {6}{19} \ ,
$$ by $A_i \geq 4 $.

If there is a large item, we are done as $\alpha>\frac 6{19}$.
Otherwise, since $5\cdot c \geq \frac{6}{19}$, if there are at
least five items of sizes in $(\frac 14,\frac 12]$, we are done.
Otherwise, all items of sizes above $\frac 14$ (at most four such
items) are packed into the two first inner bins, and every inner
bin except for the first two has at least four items of sizes in
$(\frac 15, \frac 14]$, and its load is above $\frac 45$. If there
is just one inner bin with at least one item of size above $\frac
14$, we get the same bound on the total size and the entire
analysis is the same as in the case where $\theta \in (\frac
16,\frac 15]$. Otherwise, there are at least three items with
bonuses of $\gamma$ (two in the first bin and one in the second
bin), and the second bin has at least one additional item with a
positive bonus. There are at least five other items (including an
item of the last bin) with bonuses of $\delta$. The total bonus is
therefore at least $3\gamma+6\delta>0.29$. The total size is at
least $(A_i-3)\cdot 0.8
+0.75+1=\frac{16A_i-48+15+20}{20}=\frac{16A_i-13}{20}$, and
$$\frac{28}{19} \cdot S_i \geq 7 \cdot \frac{16A_i-13}{95} =
A_i+\frac{17A_i-91}{95}\geq A_i-\frac{23}{95} > A_i -0.25 \ .$$

In the case $\theta \in (\frac 14,\frac 13]$, the total size
satisfies $S_i \geq \frac {2A_i-1}3 $, and we have $\frac{28}{19}
\cdot S_i=\frac{56A_i-28}{57}=A_i-\frac{A_i+28}{57}$. Every inner
bin has one, or two, or three items of sizes above $\frac 14$. In
the case of one item, it is large and its bonus is $a$. In the
case of two items, at least one of them has size above $\frac 13$,
and if there are multiple such bins, only one of the bins with two
items has an item with size at most $\frac 13$ as its second item.
In the case of three items, the total bonus is at least $3\gamma$.
Thus, $A_i-2$ bins have bonuses of at least $2\beta$, the last bin
has a bonus of at least $\gamma$, and another bin has a bonus of
at least $\beta+\gamma$. Thus, the total bonus is at least
$(2A_i-3)\beta+2\gamma\geq \frac{A_i+28}{57}$, since this is
equivalent to $(2\beta-\frac1{57})A_i \geq
\frac{28}{57}+3\beta-2\gamma$, which holds for $A_i\geq 4$.

In the case $\theta > \frac 13$, we use the value of $\tau$ in the
analysis. We consider the first inner bin together with the last
bin. Every inner bin that is not in the prefix of first $\tau$
inner bins has two items of sizes in $(\frac 13,\frac 12]$. The
bins of indices $2,\ldots,\tau$ have large items, and the first
inner bin either has a large item or two items of sizes in $(\frac
13,\frac 12]$, so its bonus is at least $2\beta$.

If $\tau \leq 1$, we get $S_i \geq 1+(A_i-2)\frac
23=\frac{2A_i-1}3$ and $$W_i \geq \frac {28}{19} \cdot
\frac{2A_i-1}3 +(2A_i-1)\beta=(\frac{56}{57}+2\beta)
A_i-\frac{28}{57}-\beta =A_i+(2\beta-\frac{1}{57})A_i
-(\frac{28}{57}+\beta)> A_i  \ . $$ Since $2\beta>\frac 1{57}$ and
$A_i \geq 4$, this is at least $$A_i +4(2\beta-\frac
1{57})-(\frac{28}{57}+\beta)=A_i+7\beta-\frac{32}{57}>A_i \ . $$

Otherwise, $\tau \geq 2$, and the first inner bin has a large
item, and $S_i \geq \frac{2A_i}3-\frac 16-\frac{\tau}6$,  and
$$W_i \geq \frac {28}{19}\cdot(\frac{2A_i}3- \frac
16-\frac{\tau}6)+(2(A_i-\tau-1)+1)\cdot
\beta+\tau\cdot\alpha=(\frac{56}{57}+2\beta)\cdot
A_i-(\frac{14}{57}+2\beta-\alpha )\tau-(\frac{14}{57}+\beta) \ .
$$ Since $\alpha<\frac{14}{57}+2\beta$, and by $\tau \leq A_i-1$,
we have $$W_i \geq (\frac{56}{57}+2\beta)\cdot
A_i-(\frac{14}{57}+2\beta-\alpha
)(A_i-1)-(\frac{14}{57}+\beta)=(\frac{42}{57}+\alpha)A_i+(\beta-\alpha)$$
$$=A_i+(\alpha-\frac{15}{57})A_i+(\beta-\alpha)\geq
A_i+4\alpha-\frac{60}{57}+\beta-\alpha=A_i \ .$$
\end{proof}

\medskip

We conclude with the following.

\begin{theorem} The price of clustering is at most
$\frac{\lambda}{\Delta} \approx 1.88049612$. \end{theorem}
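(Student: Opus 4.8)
The plan is to mimic exactly the chaining argument used for the $k=3$ bound (the theorem asserting the price of clustering is at most $\frac{581}{300}$). Fix a valid input with $\ell$ clusters, so that the optimal cost for every cluster is at least $k=4$. The cost of an optimal clustered solution is at most $\sum_{i=1}^{\ell} OPT_i$, and the goal is to bound this sum from above by the total weight $W=\sum_{j\in I}w(s_j)$ and then bound $W$ in terms of $OPT$.

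First I would record the $k=4$ analogue of the claim stated for $k=3$: running FFD separately on each cluster produces a valid monochromatic packing of that cluster, so $OPT_i\le A_i$; moreover $A_i\ge OPT_i\ge 4$ since the optimal cost of every cluster is at least $4$. Summing over $i$ gives $\sum_{i=1}^\ell OPT_i\le\sum_{i=1}^\ell A_i$. Next I would apply the Lemma proved just above, which yields $A_i\le W_i$ for every cluster $i$; summing and using $W=\sum_i W_i$ gives $\sum_{i=1}^\ell A_i\le W$. Finally the Claim proved just above gives $W\le\frac{\lambda}{\Delta}\cdot OPT$. Concatenating the three inequalities,
\[
\sum_{i=1}^{\ell}OPT_i\;\le\;\sum_{i=1}^{\ell}A_i\;\le\;\sum_{i=1}^{\ell}W_i\;=\;W\;\le\;\frac{\lambda}{\Delta}\cdot OPT,
\]
which is precisely the asserted bound, since the left-hand side upper-bounds the optimal clustered cost.

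So the theorem itself is immediate; all the content lives in the two preceding results, and the main obstacle — already dispatched there — is the Lemma's case analysis on $\theta$, the size of the first item packed into the last FFD bin for cluster $i$. The four bonus parameters $\alpha,\beta,\gamma,\delta$ together with the linear coefficient $\frac{28}{19}$ must be tuned so that in every regime ($\theta\in(\frac17,\frac16]$, $(\frac16,\frac15]$, $(\frac15,\frac14]$, $(\frac14,\frac13]$, and $\theta>\frac13$ further split by the prefix length $\tau\le A_i-1$) the size term $\frac{28}{19}S_i$ plus the accumulated bonuses reaches $A_i$, using only $A_i\ge 4$; in the smallest regime one additionally needs the separate size bound obtained from all six items of the third inner bin having size at least $\theta$ to handle $A_i=4$. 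Balancing these per-cluster constraints against the bin-wise identity $\frac{28}{19}+\alpha+\max\{\beta,\gamma+\delta\}=\frac{\lambda}{\Delta}$ is what pins down the constants; once they are fixed, the present theorem follows from the displayed chain with no further computation.
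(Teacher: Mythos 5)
Your proof is correct and is exactly the argument the paper intends: the theorem follows by chaining $\sum_{i=1}^{\ell}OPT_i\le\sum_{i=1}^{\ell}A_i\le\sum_{i=1}^{\ell}W_i=W\le\frac{\lambda}{\Delta}\cdot OPT$, using the preceding Claim and Lemma, precisely as in the explicit proof given for the $k=3$ theorem. Your closing remarks about where the real work lies (the case analysis on $\theta$ and the tuning of $\alpha,\beta,\gamma,\delta$) accurately describe the content of those preceding results.
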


\section{Bin packing with delays}

We briefly discuss assumptions on delay functions. A delay
function $d:[0,\infty)\rightarrow[0,\infty)$ is assumed to be
continuous. We also assume $d(0)=0$ without loss of generality, as
otherwise any algorithm will pay a delay of $d(0)$ and the delay
cost can be modified by subtracting the value $d(0)$ from it. It
is also assumed that the function is monotonically non-decreasing
and unbounded (see below for a short discussion of the unbounded
case). Since a general function can be given by an oracle while
algorithms assume that the value for every time is known
precisely, there will be a small loss in the competitive ratio,
where the loss is small due to continuity.

Our algorithm is a variant of the algorithm of \cite{AESV}, where
this class of algorithms waits until the current total delay
reaches a certain value. For linear delay functions, one can
implement it exactly, while for other delay functions it is
necessary to query the oracle frequently to see whether the
required total delay was already reached. As mentioned in
\cite{AESV}, it is not hard to adapt the previously known
algorithm (presented there) to work with such a delay function,
and with item specific delay functions. For simplicity, we will
describe the algorithm assuming that it is possible to calculate
the current delay for every item exactly, and to maintain this
value in a continuous manner.

Another cause of a small error is due to the usage of an
irrational parameter, which is rounded slightly. The parameter of
\cite{AESV} was $1$, so this minor difficulty did not exist there.
For simplicity, we will assume in what follows that any real
parameter can be used exactly.

The algorithm assumes that a total order is given on the arriving
items. The order satisfies the property that an earlier item has
an index smaller than that or an item arriving later. For items
arriving at the same time, an order in which the algorithm
processes them is used. We note that one can assume that an
optimal solution also processes the input as a sequence, and it
for every bin it opens, this is done right after it processes the
last item of the bin (where the last item is the item of maximum
index packed into the bin according to the ordering of the
algorithm). This associates every bin of the optimal solution (a
fixed optimal offline solution which we consider and compare
online algorithms to) with one specific item.

The algorithm has a positive parameter $\rho$ and acts as follows.
The algorithm works in phases, where in every phase it
continuously keeps a value that is the total current delay of all
unpacked items. Once this value reaches $\rho$, the algorithm
defines the current phase as the consecutive subsequence of items
starting with the first item that does not belong to the previous
phase (or starting with the very first item, if this is the first
phase), and ending with the last item that was already processed.
It packs the items of the current phase by FFD, and it will start
a new phase with the next item, if it exists, or it will terminate
if the input ended. We call this algorithm {\it modified} since
the main structure is unchanged and it is the same as the one of
\cite{AESV}, but we use a parameter $\rho>0$, while the parameter
of \cite{AESV} was simply equal to $1$. Our analysis will be
different.

Note that even if no new items arrive and the input was
terminated, the algorithm may still be in the process of
constructing the last phase. Since the delay functions are
unbounded, the algorithm will pack all items of the last phase
once the last phase is defined, and this will happen before it
halts. Alternatively, it is possible to use bounded delay
functions. In this case, it could happen in the last phase that
the total delay will not reach the value $\rho$, and the algorithm
should pack the remaining items once the input has stopped.

Our parameter $\rho$ will be equal to approximately
$0.4640251938$.

\begin{theorem} The competitive ratio of the modified algorithm with the best parameter is at most $3.1550554008$.
\end{theorem}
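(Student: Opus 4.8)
The plan is to compare the cost of the modified algorithm with parameter $\rho$ to the optimal offline cost $OPT$ by accounting separately for the bin cost and the delay cost of each phase, and then optimizing over $\rho$. First I would set up notation: let the algorithm's phases be $P_1,\dots,P_m$, where $P_j$ is the set of items assigned to phase $j$. By the phase rule, each completed phase $P_j$ (i.e., every phase except possibly the last) incurs a total delay of exactly $\rho$ on the items that were unpacked when the phase closed; summing over phases, the algorithm's total delay cost is at most $\rho\cdot m$ plus a small last-phase term, so the algorithm's cost is at most $(1+\rho)\cdot m + \sum_j b_j$ up to lower-order terms, where $b_j$ is the number of bins FFD uses for $P_j$. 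Since FFD on a set of items with total size $S(P_j)$ uses at most roughly $\frac{11}{9}S(P_j)$ bins asymptotically — but here, crucially, we want an \emph{absolute} bound — I would instead bound $\sum_j b_j$ by invoking the structural/weighting machinery from Section 2: every phase is packed by FFD, and the same weight-function argument (Claims \ref{was3}–\ref{con2} and their proofs) shows each phase's FFD bin count is controlled by the total size of its items together with a bounded additive error. The combination gives the algorithm's cost in terms of $m$ and $S(I)=\sum_j S(P_j)$.

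Next I would lower-bound $OPT$ two ways. The volume bound gives $OPT \ge S(I)$ (each optimal bin holds total size at most $1$). The delay-based bound is the heart of the argument: I would charge the $\rho$ of delay accumulated in each completed phase $P_j$ against $OPT$. The standard ski-rental-style argument says that within the time window of phase $P_j$, the optimal solution must either open a bin (paying $1$) or let items wait (paying delay); since the phase accumulated delay $\rho$ among items that OPT also had to handle, OPT pays at least something like $\min\{1,\rho\}$-ish per phase, but one must be careful — items straddling phase boundaries and the association of each OPT bin with its maximum-index item (set up in the preamble) are exactly what make this precise. Concretely, I expect the right statement to be: $OPT \ge \beta_1 m - O(1)$ for an appropriate constant $\beta_1 = \beta_1(\rho)$ coming from the worst case where OPT opens one bin per phase versus paying the delay, combined with a term $\beta_2 \cdot S(I)$. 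Taking a convex combination of the volume bound and the delay bound, and matching it against the algorithm's cost $(1+\rho)m + c\cdot S(I) + O(1)$, yields a ratio that is a function of $\rho$ alone.

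The final step is to optimize the resulting bound over $\rho > 0$. I would write the competitive ratio as $R(\rho) = \max$ of a few linear-fractional expressions in $\rho$ (one expression dominating when phases are "bin-heavy," another when they are "delay-heavy," another when items straddle boundaries), set the two active branches equal, and solve — this is where the specific value $\rho \approx 0.4640251938$ and the bound $3.1550554008$ come from. I would verify that with this $\rho$ all the case bounds are simultaneously at most $3.1550554008$, and note that the $O(1)$ additive terms wash out because the competitive ratio is an asymptotic/worst-case-over-long-sequences quantity, or more precisely because one can drive $OPT$ to infinity (e.g.\ by concatenating many independent hard instances) so that the additive slack is negligible.

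\textbf{Main obstacle.} The hard part will be the delay-based lower bound on $OPT$ per phase — i.e.\ proving that $OPT$ cannot both avoid opening bins \emph{and} avoid the delay that the algorithm's phase rule detected, even though $OPT$ may group items across phase boundaries and may delay an item across the boundary to pack it with later items. Handling the boundary items cleanly (using the "each OPT bin is associated with one specific item" device from the preamble, and possibly absorbing at most one problematic item per phase into the $O(1)$ slack) is the delicate accounting that drives the exact constant; the FFD bin-count bound and the final single-variable optimization are comparatively routine given Section 2.
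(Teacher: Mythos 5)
Your overall skeleton matches the paper's: write the algorithm's cost as $(1+\rho)\ell$ plus the FFD bin counts of the phases, lower-bound $OPT$ separately through its bins and through its delay, and balance the two branches over $\rho$. The delay side is essentially right, and the paper makes it precise with exactly the device you point to: each bin of $OPT$ is associated with its maximum-index item, a phase either contains such an associated item (there are at most $B$ such phases) or contains none, in which case $OPT$ packs every item of that phase only after the phase closes and hence pays at least the delay $\rho$ that the phase accumulated. This yields $(1+\rho)\ell\le(1+\rho)B+(1+\tfrac1\rho)D$ cleanly, with no additive slack and no $\min\{1,\rho\}$ case analysis.

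The genuine gap is in how you bound the bin term. You propose to control each phase's FFD bin count by the total \emph{size} of its items and then invoke the volume bound $OPT\ge S(I)$. That cannot give the stated constant: for items of size $\tfrac12+\eps$, FFD opens one bin per item while the total size is only about half the number of items, so size alone yields only $\sum_i X_i\le 2S(I)+\ell\le 2B+\ell$, and balancing $\max\{1+\tfrac1\rho,\ 3+\rho\}$ gives $2+\sqrt2\approx3.414$, not $3.1551$. The paper instead imports a weight function $v$ from the bin packing with conflicts reference, with two properties: $FFD(J)\le V(J)+1$ for every item set $J$, and $V(J')\le\pi_{\infty}<1.691030207$ for every $J'$ of total size at most $1$. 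The first property is applied per phase (the $+1$ is absorbed into the $(1+\rho)\ell$ term), and the second is applied \emph{per bin of} $OPT$, giving $V(I)\le\pi_{\infty}B$ directly, i.e., a charge against $OPT$'s bins rather than against volume. This produces the branch $1+\rho+\pi_{\infty}$ and, after balancing with $1+\tfrac1\rho$, the value $\rho\approx0.4640251938$ and the ratio $3.1550554008$. The Section 2 weighting claims you cite are not a substitute: they bound weights from below by $A_i$ under the assumption $A_i\ge3$ for a clustered packing, whereas here a phase may produce a single bin and the comparison must go through a per-bin upper bound on weight against $OPT$. A final small point: no ``drive $OPT$ to infinity'' step is needed or available here; the claimed bound is an absolute competitive ratio, and the paper's accounting leaves no residual additive terms.
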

\begin{proof}  Let $I$ be the input, let $\ell$ be the number of phases used by the
algorithm, let $X_i \geq 1$ be the number of bins of phase $i$
(where every phase has at least one item), and let $I_i$ be the
set of items for this phase. We use the following analysis of FFD.
For a set of items $J$, let $FFD(J)$ be the number of bins that
FFD creates for $J$, and let $V(J)$ be the total weight of items
for a given weight function $v$.

By \cite{EL06a}, Lemmas 2 and 5, there exists a weight function $v:[0,1]\rightarrow
[0,1]$ such that the two properties below  hold for any set of
items $J$.

\smallskip

\noindent {\bf 1.} $FFD(J) \leq V(J)+1$.

\smallskip

\noindent {\bf 2.} For any set of items $J'$ of total size at most $1$, $V(J')
\leq \pi_{\infty}$, where $\pi_{\infty} < 1.691030207$.

\smallskip

The definition of $v$ is as follows. For $x>\frac 12$, it holds
that $v(x)=1$. For $0<x\leq \frac 12$, such that $(\frac
1{j+1},\frac 1j]$ for an integer $j$ (there is a unique value of
$j$ for every such $x$), $v(x)=x+1/(j(j+1))$. Finally, for $x=0$,
let $v(x)=0$. This last value does not appear in \cite{EL06a}, and
we briefly explain why it does not affect the properties. For the
first property, if $J$ only has items of size zero, we have
$V(J)=0$ and $FFD(J)=1$. Otherwise, the output of FFD and the
total weight is not affected by items of size zero. As for the
second property, adding items of size zero changes neither the
total size nor the weight.

The value $\pi_{\infty}$ is defined by a sequence frequently
encountered in bin packing problems \cite{LeeLee85}, defined as
follows: $c_1=1$, and for $i>1$, $c_i=c_{i-1}(c_{i-1}+1)$. Then,
$\pi_{\infty}=\sum_{i=1}^{\infty} \frac{1}{c_i}$.

Using the first property for every phase separately, we get that
$X_i-1 =FFD(I_i) - 1 \leq V(I_i)$ holds for every phase $i$. The
cost of the algorithm for phase $i$ is $\rho+X_i \leq
\rho+1+V(I_i) $ For all phases, we have the the cost of the
algorithm is $$\sum_{i=1}^{\ell}
(\rho+1+V(I_i))=\ell\cdot(\rho+1)+\sum_{i=1}^{\ell}
V(I_i)=\ell\cdot (\rho+1)+V(I) \ . $$

Consider a fixed optimal solution for the input, denoted by $OPT$.
Let $B$ and $D$ denote the number of bins of $OPT$, and the delay
of $OPT$, respectively. By the second property of the weight
function, $V(I) \leq \pi_{\infty}\cdot B$.

There can be two types of phases. The first type is a phase for
which the optimal solution has at least one bin that is associated
with an item of the phase. The second type is a phase where there
is no bin of the optimal solution that is associated with an item
of the phase. Let $\ell_1$ and $\ell_2$ (where
$\ell=\ell_1+\ell_2$) be the numbers of phases of the two types.
Obviously, it holds that $B \geq \ell_1$. In phases of the second
type, $OPT$ pays at least the same delay as the algorithm (since
the same items wait at least the same time to be packed), so $D
\geq \ell_2\cdot \rho$.

We get $$(1+\rho)\ell = (1+\rho)\ell_1+(1+\rho)\ell_2 \leq
(1+\rho)B+(1+\frac 1{\rho})D \ . $$

Thus, the total cost of the algorithm is at most
$$\sum_{i=1}^{\ell} (\rho+X_i)=\ell\cdot(\rho+1)+\sum_{i=1}^{\ell}
(X_i-1)\leq (1+\frac 1{\rho})D+(1+\rho)B+B\cdot
\pi_{\infty}=(1+\frac 1{\rho})D+(1+\rho+\pi_{\infty})B \ . $$

Letting $\rho \approx 0.4640251938$ we get a competitive ratio not
exceeding the following bound: \\ ${\max\{(1+\frac
1a),1+a+\pi_{\infty}\}<3.1550554008}$.
\end{proof}

\bibliographystyle{abbrv}

\begin{thebibliography}{10}

\bibitem{ahlroth2013online}
L.~Ahlroth, A.~Schumacher, and P.~Orponen.
\newblock Online bin packing with delay and holding costs.
\newblock {\em Operations Research Letters}, 41(1):1--6, 2013.

\bibitem{AB05}
S.~Albers and H.~Bals.
\newblock Dynamic {TCP} acknowledgment: Penalizing long delays.
\newblock {\em {SIAM} Journal on Discrete Mathatics}, 19(4):938--951, 2005.

\bibitem{AESV}
Y.~Azar, Y.~Emek, R.~van Stee, and D.~Vainstein.
\newblock The price of clustering in bin-packing with applications to
  bin-packing with delays.
\newblock In {\em The 31st {ACM} on Symposium on Parallelism in Algorithms and
  Architectures, (SPAA2019)}, pages 1--10, 2019.

\bibitem{azar2017online}
Y.~Azar, A.~Ganesh, R.~Ge, and D.~Panigrahi.
\newblock Online service with delay.
\newblock In {\em Proceedings of the 49th Annual ACM Symposium on Theory of
  Computing (STOC2017)}, pages 551--563, 2017.

\bibitem{BC81}
B.~S. Baker and E.~G. {Coffman, Jr.}
\newblock A tight asymptotic bound for next-fit-decreasing bin-packing.
\newblock {\em SIAM J. on Algebraic and Discrete Methods}, 2(2):147--152, 1981.

\bibitem{BBDEL_ESA18}
J.~Balogh, J.~B{\'{e}}k{\'{e}}si, G.~D{\'{o}}sa, L.~Epstein, and
A.~Levin.
\newblock A new and improved algorithm for online bin packing.
\newblock In {\em Proc. of the 26th European Symposium on Algorithms
  (ESA2018)}, pages 5:1--5:14, 2018.

\bibitem{BBDEL_newlb}
J.~Balogh, J.~B{\'{e}}k{\'{e}}si, G.~D{\'{o}}sa, L.~Epstein, and
A.~Levin.
\newblock A new lower bound for classic online bin packing.
\newblock {\em CoRR}, abs/1807.05554, 2018.

\bibitem{BBDGT}
J.~Balogh, J.~B{\'e}k{\'e}si, G.~D\'{o}sa, G.~Galambos, and
Z.~Tan.
\newblock Lower bound for 3-batched bin packing.
\newblock {\em Discrete Optimization}, 21:14--24, 2016.

\bibitem{ftsoda}
J.~Balogh, J.~B{\'{e}}k{\'{e}}si, G.~D{\'{o}}sa, J.~Sgall, and
R.~van Stee.
\newblock The optimal absolute ratio for online bin packing.
\newblock {\em Journal of Computer and System Sciences}, 102:1--17, 2019.

\bibitem{bienkowski2017match}
M.~Bienkowski, A.~Kraska, and P.~Schmidt.
\newblock A match in time saves nine: Deterministic online matching with
  delays.
\newblock In {\em Proc. of the 15th International Workshop on Approximation and
  Online Algorithms (WAOA2017)}, pages 132--146, 2017.

\bibitem{dooly2001line}
D.~R. Dooly, S.~A. Goldman, and S.~D. Scott.
\newblock On-line analysis of the {TCP} acknowledgment delay problem.
\newblock {\em Journal of the ACM}, 48(2):243--273, 2001.

\bibitem{Do15}
G.~D\'{o}sa.
\newblock Batched bin packing revisited.
\newblock {\em Journal of Scheduling}, 20(2):199--209, 2017.

\bibitem{DS12}
G.~D\'osa and J.~Sgall.
\newblock {First Fit} bin packing: A tight analysis.
\newblock In {\em Proc. of the 30th International Symposium on Theoretical
  Aspects of Computer Science (STACS2013)}, pages 538--549, 2013.

\bibitem{DS14}
G.~D{\'{o}}sa and J.~Sgall.
\newblock Optimal analysis of {Best Fit} bin packing.
\newblock In {\em Proc. of the 41st International Colloquium on Automata,
  Languages, and Programming (ICALP2014), Part {I}}, pages 429--441, 2014.

\bibitem{emek2016online}
Y.~Emek, S.~Kutten, and R.~Wattenhofer.
\newblock Online matching: haste makes waste!
\newblock In {\em Proceedings of the 48th Annual ACM Symposium on Theory of
  Computing (STOC2016)}, pages 333--344, 2016.

\bibitem{Epstein16}
L.~Epstein.
\newblock More on batched bin packing.
\newblock {\em Operations Research Letters}, 44(2):273--277, 2016.

\bibitem{EL06a}
L.~Epstein and A.~Levin.
\newblock On bin packing with conflicts.
\newblock {\em {SIAM} Journal on Optimization}, 19(3):1270--1298, 2008.

\bibitem{FvL}
W.~{{Fernandez} de la Vega} and G.~S. Lueker.
\newblock Bin packing can be solved within $1+\varepsilon$ in linear time.
\newblock {\em Combinatorica}, 1(4):349--355, 1981.

\bibitem{GW93}
G.~Galambos and G.~J. Woeginger.
\newblock Repacking helps in bounded space online bin packing.
\newblock {\em Computing}, 49:329--338, 1993.

\bibitem{GJYbatch}
G.~Gutin, T.~Jensen, and A.~Yeo.
\newblock Batched bin packing.
\newblock {\em Discrete Optimization}, 2(1):71--82, 2005.

\bibitem{karlin2003dynamic}
A.~R. Karlin, C.~Kenyon, and D.~Randall.
\newblock Dynamic {TCP} acknowledgment and other stories about $e/(e-1)$.
\newblock {\em Algorithmica}, 36:209--224, 2003.

\bibitem{KMRS88}
A.~R. Karlin, M.~S. Manasse, L.~Rudolph, and D.~D. Sleator.
\newblock Competitive snoopy caching.
\newblock {\em Algorithmica}, 3:77--119, 1988.

\bibitem{KK82}
N.~Karmarkar and R.~M. Karp.
\newblock An efficient approximation scheme for the one-dimensional bin-packing
  problem.
\newblock In {\em Proc. of the 23rd Annual Symposium on Foundations of Computer
  Science (FOCS1982)}, pages 312--320, 1982.

\bibitem{Karp92}
R.~M. Karp.
\newblock On-line algorithms versus off-line algorithms: How much is it worth
  to know the future?
\newblock In {\em Proc. of the IFIP 12th World Computer Congress (IFIP1992),
  Algorithms, Software, Architecture - Information Processing, volume A-12 of
  IFIP Transactions}, pages 416--429, 1992.

\bibitem{LeeLee85}
C.~C. Lee and D.~T. Lee.
\newblock A simple online bin packing algorithm.
\newblock {\em Journal of the ACM}, 32(3):562--572, 1985.

\bibitem{SL94}
D.~Simchi-Levi.
\newblock New worst-case results for the bin-packing problem.
\newblock {\em Naval Research Logistics}, 41(4):579--585, 1994.

\bibitem{Woegin93}
G.~J. Woeginger.
\newblock Improved space for bounded-space online bin packing.
\newblock {\em SIAM Journal on Discrete Mathematics}, 6:575--581, 1993.

\end{thebibliography}

\end{document}